\documentclass{sig-alternate}

\usepackage[noend]{algorithmic,algorithm}
\usepackage{multicol,multirow}
\usepackage{amsmath,amsfonts,amssymb}
\usepackage{float,xfrac}
\usepackage{color,graphicx,graphics,epsfig,epstopdf}
\usepackage[dvipsnames]{xcolor}
\usepackage{cite,url}














\newtheorem{theorem}{Theorem}[section]
\newtheorem{corollary}{Corollary}
\newtheorem{lemma}{Lemma}
\newtheorem{claim}{Claim}

\floatstyle{ruled}
\newfloat{algorithm}{tbp}{loa}
\floatname{algorithm}{Algorithm}

\newcommand{\comment}[1]{}

\newcommand{\michal}[1]{{\color{red}{MF: #1}}}

\begin{document}

\newcommand{\be}{\begin{equation}}
\newcommand{\ee}{\end{equation}}
\newcommand{\argmin}{\mathop{\rm argmin}}
\newcommand{\argmax}{\mathop{\rm argmax}}
\newcommand{\vr}[1]{{\mathbf{#1}}}
\newcommand{\bydef}{\stackrel{\bigtriangleup}{=}}
\newcommand{\eps}{\varepsilon}
\newcommand{\mm}[1]{\mathrm{#1}}
\newcommand{\mc}[1]{\mathcal{#1}}
\newcommand{\mb}[1]{\mathbf{#1}}
\newcommand{\vect}[1]{\ensuremath{\mathbf{#1}}}
\newcommand{\R}{\mathbb{R}}

\def \EE   {{\mathbb E}}
\def \OPT {\mathcal{OPT}}
\def \vf  {\textrm{vf}}
\def \dvf {\varphi}
\def \utility {u}
\def \reals {{\mathbb R}}

\newcommand{\dist}{F}
\newcommand{\dists}{{\mathbf \dist}}
\newcommand{\distsmi}{\dists_{-i}}
\newcommand{\disti}[1][i]{{\dist_{#1}}}

\newcommand{\dens}{f}
\newcommand{\denss}{{\mathbf \dens}}
\newcommand{\densi}[1][i]{{\dens_{#1}}}

\newcommand{\agents}{N}
\newcommand{\nagent}{n}
\newcommand{\items}{M}
\newcommand{\nitem}{m}
\newcommand{\auction}{A}

\newcommand{\partition}{\Gamma}
\newcommand{\partitioni}{\Gamma_i}

\newcommand{\CWE}[0]{\textsf{CWE}}
\newcommand{\ef}[0]{envy free}
\newcommand{\EF}[0]{EF}
\newcommand{\CWEWOMC}[0]{CWE without market clearance}
\newcommand{\SW}[0]{\textsf{SW}}
\newcommand{\Revenue}[0]{\textsf{REV}}

\newcommand{\bid}{b}
\newcommand{\bids}{{\mathbf \bid}}
\newcommand{\bidsmi}{{\mathbf \bid}_{-i}}
\newcommand{\bidi}[1][i]{{\bid_{#1}}}

\newcommand{\val}{v}
\newcommand{\vals}{{\mathbf \val}}
\newcommand{\valsmi}{{\mathbf \val}_{-i}}
\newcommand{\vali}[1][i]{{\val_{#1}}}
\newcommand{\valith}[1][i]{{\val_{(#1)}}}

\newcommand{\util}{u}
\newcommand{\utils}{{\mathbf\util}}
\newcommand{\utilsmi}{\utils_{-i}}
\newcommand{\utili}[1][i]{\util_{#1}}

\newcommand{\price}{p}
\newcommand{\prices}{{\mathbf \price}}
\newcommand{\pricei}[1][i]{{\price_{#1}}}

\newcommand{\type}{t}
\newcommand{\types}{{\mathbf \type}}
\newcommand{\typesmi}{\types_{-i}}
\newcommand{\typei}[1][i]{{\type_{#1}}}

\newcommand{\alloc}{X}
\newcommand{\allocs}{{\mathbf \alloc}}
\newcommand{\allocsmi}{\allocs_{-i}}
\newcommand{\alloci}[1][i]{{\alloc_{#1}}}
\newcommand{\talloc}{Y}
\newcommand{\tallocs}{{\mathbf\talloc}}
\newcommand{\tallocsmi}{\tallocs_{-i}}
\newcommand{\talloci}[1][i]{{\talloc_{#1}}}

\newcommand{\rank}{r}
\newcommand{\ranks}{{\mathbf \rank}}
\newcommand{\ranksmi}{\ranks_{-i}}
\newcommand{\ranki}[1][i]{{\rank_{#1}}}

\newcommand{\crit}{\theta}
\newcommand{\crits}{{\mathbf \crit}}
\newcommand{\critsmi}{\crits_{-i}}
\newcommand{\criti}[1][i]{{\crit_{#1}}}

\newcommand{\decl}{d}
\newcommand{\decls}{{\mathbf \decl}}
\newcommand{\declsmi}{\decls_{-i}}
\newcommand{\decli}[1][i]{{\decl_{#1}}}

\newcommand{\demand}{D}
\newcommand{\demands}{{\mathbf \demand}}
\newcommand{\demandsmi}{\demands_{-i}}
\newcommand{\demandi}[1][i]{{\demand_{#1}}}

\newcommand{\CWEalg}{{\sc CWE} algorithm}

\newcommand{\Pool}{\text{Pool}}
\newcommand{\Pop}{\textbf{Pop}}
\newcommand{\Push}{\textbf{Push}}
\newcommand{\Bundle}{\textbf{Bundle}}
\newcommand{\ResolveConflict}{\textbf{ResolveConflict}}

\newcommand{\RaisedAgents}{N}


\newcommand{\Reject}{\text{Reject}}
\newcommand{\AllocateDemand}{\textbf{AllocateDemand}}
\newcommand{\RaisePrices}{\textbf{RaisePrices}}

\conferenceinfo{STOC'13,} {June 1--4, 2013, Palo Alto, California, USA.} 
 \CopyrightYear{2013} 
 \crdata{978-1-4503-2029-0/13/06} 
 \clubpenalty=10000 
 \widowpenalty = 10000

\title{Combinatorial Walrasian Equilibrium}

%
%
%
%
%

\numberofauthors{3} 
%
\author{
%
%
\alignauthor
Michal Feldman\\
       \affaddr{The Hebrew University of Jerusalem \& Harvard University}\\
       \email{michal.feldman@huji.ac.il}
\alignauthor
Nick Gravin\\
       \affaddr{Nanyang Technological University}\\
       \email{ngravin@pmail.ntu.edu.sg}
\alignauthor Brendan Lucier\\
       \affaddr{Microsoft Research New England}\\
       \email{brlucier@microsoft.com}
}
\date{}

\maketitle

\begin{abstract}
We study a combinatorial market design problem, where a collection of indivisible objects is to be priced and sold to potential buyers subject to equilibrium constraints.
The classic solution concept for such problems is Walrasian Equilibrium (WE), which provides a simple and transparent pricing structure that achieves optimal social welfare.  
The main weakness of the WE notion is that it exists only in very restrictive cases. To overcome this limitation, we introduce the notion of a Combinatorial Walrasian equilibium (CWE), a natural relaxation of WE. 
The difference between a CWE and a (non-combinatorial) WE is that the seller can package the items into indivisible bundles prior to sale, and the market does not necessarily clear.

We show that every valuation profile admits a CWE that obtains at least half of the optimal (unconstrained) social welfare. 
Moreover, we devise a poly-time algorithm that, given an arbitrary allocation $X$, computes a CWE that achieves at least half of the welfare of $X$.
Thus, the economic problem of finding a CWE with high social welfare reduces to the algorithmic problem 
of social-welfare approximation.
In addition, we show that every valuation profile admits a CWE that extracts a logarithmic fraction of the optimal welfare as revenue.  
Finally, these results are complemented by strong lower bounds when the seller is restricted to using item prices only, which motivates the use of bundles.
The strength of our results derives partly from their generality --- 
our results hold for arbitrary valuations
that may exhibit complex combinations of substitutes and complements.
\end{abstract}

\category{F.2}{Theory of Computation}{Analysis of Algorithms and Problem Complexity}
\category{J.4}{Computer Applications}{Social and Behavioral Sciences}[Economics]


\keywords{Combinatorial auctions; Walrasian Equilibrium; Envy-free} 

\section{Introduction}
\label{sec:introduction}

Recent years have been marked by an explosion of interest in the role of computer science theory in market design.
Large-scale, computer-aided combinatorial markets are becoming a reality, with the FCC spectrum auctions emerging as a front-running example \cite{Ausubel2002}.
The potential outcome of this line of work is a system in which many bidders, each having complex preferences over combinations of items for sale, can express these preferences to an auction resolution algorithm that decides an appropriate outcome and payments.
Spurred forward by this vision, the computer science community has generated an entire subfield of work on developing efficient algorithms for combinatorial allocation problems \cite{Archer2001, DDDR-08, DNS-06, HKNS2004, LOS99, PSS-08}.

Much of the existing work on combinatorial auctions has focused on the desideratum of incentive compatibility, where bidders are incentivized to report their preferences truthfully to an auction resolution mechanism.
It is our view, however, that the connection between computational requirements and combinatorial market design is much broader than the design of incentive compatible mechanisms, and combinatorial extensions of complex
markets are fundamental in a wider context.
In this paper we study a classic market design problem: setting prices so that socially efficient outcomes arise when buyers select their most demanded sets. We propose a natural combinatorial extension of this problem, whereby the seller can choose to bundle objects prior to assigning prices.
We demonstrate that providing this basic operation to the seller leads to the existence of (and algorithms to find) near-optimal outcomes in settings that were previously known to suffer from severe limitations.


\vspace{-0.02in}

\paragraph{Background: Walrasian Equilibrium}
A vast literature in economic theory is dedicated to methods of assigning prices to outcomes so that a market clears in equilibrium and a socially efficient outcome arises. Suppose that we have a single seller\footnote{We note that the seller could be either a government agency wishing to maximize market efficiency, or a monopolist wishing to maximize revenue.} with a set $\items$ of $m$ items for sale, and there is a set $\agents$ of $\nagent$ buyers who have possibly complex preferences over the items, represented by a valuation function $\vali(\cdot):2^{\items} \rightarrow \R_{\geq 0}$ that maps every subset of $\items$ to a real value. A strong notion of a pricing equilibrium for such a market is as follows.  First, the seller sets prices $\{\price_j\}_{j \in \items}$ on the items for sale. Second, each buyer selects his most-desired set of items at those prices,
i.e., a set $S$ in $\argmax_S \vali(S)-\sum_{j \in S}\price_j$.
If no item is desired by more than one bidder (i.e., there is no over-demand), and all items are sold (i.e., there is no over-supply), this outcome is known as a Walrasian equilibrium (WE).


The WE solution concept is appealing:
despite competition among the agents, every buyer is maximally happy with her allocation, the market clears, and the pricing structure is natural, simple, and transparent.
Moreover, it is known that a WE, when it exists, is socially efficient; i.e., it maximizes {\em social welfare} --- the sum of buyers' valuations \cite{Bikhchandani1997}.
%
The main disadvantage of WE is that the concept is ``too good to be true" --- it is known to exist only for an extremely restrictive subset of sub-modular valuations, known as {\em gross substitutes} \cite{Gul1999}.
Since a motivating feature of combinatorial auctions is the ability to capture
complementarities in the buyers' preferences (i.e., super-additive valuation functions), this restriction limits the applicability of WE in many algorithmic mechanism design settings.

Circumventing the existence problem requires relaxing the WE notion, and several approaches can be taken with respect to this relaxation.
One approach is to allow the seller to set arbitrary bundle prices instead of item prices; i.e., set a price $p_S$ for every bundle $S$ (see, e.g., \cite{Blumrosen10,Bikhchandani2002,Parkes2000}).
This approach does lead to strong existence and efficiency results, but loses much of the simplicity and transparency that is offered by item pricing.

Another approach is to relax the requirement for market clearance (while still insisting that every buyer maximizes his utility).
This approach is natural in settings where the seller wishes to maximize some well-defined objective function, such as social welfare or revenue, and might be able to credibly leave some unsold items in the market.
This relaxation completely solves the existence problem;
indeed, an outcome in which all items are priced prohibitively high would trivially adhere to the proposed equilibrium notion.
It might, however, come at a huge social expense.
This begs the question: can a good social welfare be supported in a pricing equilibrium that relaxes the market clearance condition?
As we show in Section \ref{sec:item-pricing}, the answer is surprisingly discouraging.
In particular, even for the class of {\em fractionally subadditive} valuations \cite{Feige2006} ---
a strict subset of subadditive functions that exhibits strong substitutability among items --- the loss in social welfare can be linear in the number of items.
Relaxing market clearance, therefore, is not sufficient, and a new approach is needed.
In what follows we introduce a new equilibrium concept that captures a novel approach to the problem.

\vspace{-0.02in}

\paragraph{A New Concept: Combinatorial Walrasian Equilibrium (CWE)}
We propose to pair the notion of Walrasian equilibrium with a simple combinatorial operation, as follows.
The seller first {\em partitions} the items for sale into indivisible {\em bundles}.
This partition induces a {\em reduced market}, where individual items are no longer available, rather only the specified bundles.
This operation can be perceived as redefining the items.
Each individual bundle --- now an indivisible item --- is then assigned a price, overall pricing over the reduced market is {\em linear}, i.e., the price of a set of bundles equals the sum of the bundles' prices\footnote{This essential feature --- linearity of prices in the reduced market --- distinguishes the proposed solution concept from previous notions in the spirit of bundle pricing \cite{Bikhchandani2002}.}.
The outcome of such a process is a CWE if every bidder obtains a utility-maximizing set of bundles in the reduced market, and no conflict arises.
Thus, the essential feature of a CWE is the ability of the seller to redefine his items by pre-bundling them prior to sale.
This is an innocuous and natural power to afford the seller; after all, as the owner of the objects to be sold, it seems reasonable that he may choose to repackage them as he sees fit.

Clearly, a CWE is guaranteed to exist for every valuation profile, even without relaxing market clearance.
Indeed, the seller could simply collect all objects together into a single grand bundle, and then sell that one bundle to the bidder who values it most.
However, this may be a very inefficient outcome for the market.
The natural combinatorial question, then, is whether there exists a partition of the objects (and associated prices)
so that a CWE exists and has high social welfare.
An additional question is how best to partition the objects and set prices in order to maximize the seller's revenue.
We are interested in both the existential and computational aspects of these problems.

\vspace{-0.02in}

\paragraph{Our Results}

An observation that facilitates our analysis is a characterization of the set of CWE allocations (i.e., allocations that admit supporting CWE prices).
Specifically, an outcome can be implemented at CWE if and only if that outcome is an optimal solution to a certain linear program: the configuration LP for the assignment problem, restricted to the bundles in the outcome allocation.
This implies that every CWE generates an efficient allocation of the bundles that are sold, though we note that this necessary condition is not always sufficient. In particular, the optimal allocation cannot necessarily be implemented at CWE; in Section~\ref{subsec:gap} we exhibit an example where the welfare-optimal CWE attains only $\sfrac{2}{3}$ of the unconstrained optimal social welfare.

Characterization in hand, we study the problem of finding CWE outcomes that maximize social welfare and/or revenue in general market settings.
%
%
Our main result is the following:

\vspace{0.1in}

\noindent{\bf Result 1 (2-approximation for social welfare):}
Given an allocation $\tallocs$, we provide an algorithm that computes a CWE $\allocs$ that guarantees a social welfare of at least  $\frac{1}{2}\SW(\tallocs)$, and runs in polynomial time, given an access to each bidder's {\em demand oracle}.

\vspace{0.1in}

A direct corollary of the above result is that every instance admits a CWE that obtains at least half of the optimal (unconstrained) social welfare.
Note that our result does not restrict the preferences of the bidders; it holds for arbitrary valuation functions, including those with complements.
Moreover, since the result holds for arbitrary $\tallocs$, every social-welfare approximation can be converted into a CWE allocation that achieves the same approximation (up to factor 2).
In other words, our algorithm can be interpreted as a {\em black-box reduction} that reduces the economic problem of finding a CWE with good social welfare to the algorithmic problem of social-welfare approximation for a given class of valuation functions.
The fact that our method proceeds in a black-box manner is significant, as it allows a separation of the algorithmic and economic aspects of our pricing problem; such reductions have been developed only rarely, such as for approximately efficient Bayesian incentive compatible mechanisms \cite{HL10,HKM11}.

The presentation of our algorithm is given in two stages.
We first describe an algorithm that provides the desired approximation result, albeit might run in exponential time.
The advantage of this algorithm is its simplicity and its natural interpretation as an ascending price auction (see Section~\ref{sec:sw}).
A more challenging task is to modify the proposed algorithm into a poly-time algorithm that preserves the same approximation ratio.
This is the content of Section~\ref{sec:poly}.
The algorithm is polynomial, given an access to a {\em demand oracle} of each agent (in the reduced market) --- where agents get a set of item prices and respond with their most desired bundle given these prices.
We note that the definition of CWE implicitly involves agents answering demand queries, hence our assumption that we have access to demand oracles is effectively driven by the notion of CWE itself.
Additionally, since we consider reduced markets defined by the seller's choice of partition, we will allow demand queries over any reduced market (not just the original instance without bundles).

We also provide a negative result illustrating the need for bundles: there are instances in which no equilibrium with \emph{item} prices gives a sublinear approximation to social welfare, even if valuations are fractionally subadditive.

We next consider the problem of revenue maximization and provide the following results.

\vspace{0.1in}

\noindent{\bf Result 2 ($O(\log \nagent)$-approximation for revenue):}
Given an allocation $\tallocs$ to $\nagent$ buyers, we provide an algorithm that computes a CWE $\allocs$ that extracts revenue of $O(\log \nagent)$ fraction of $\SW(\tallocs)$, and runs in polynomial time, given an access to agents' {\em demand oracles}.

\vspace{0.1in}

Moreover, this result is tight in terms of the trade-off between social welfare and revenue objectives for the outcomes that might be supported at CWE: there are instances in which no CWE extracts more than a logarithmic fraction of the social welfare.  A corollary of our result is that, for any class of valuations functions that admits a polytime constant approximation to social welfare, one can find (in polytime) a CWE that obtains an $O(\log n)$ approximation to the revenue-optimal CWE.  Furthermore, a computational hardness result due to Briest \cite{Briest06} shows that one cannot hope for better than a polylogarithmic approximation: even in the special case of unit-demand bidders, where CWE reduces to envy-free pricing, there is a lower bound of $\Omega(\log^{\epsilon}(n))$ for the problem of approximating optimal revenue (subject to natural hardness assumptions).



\vspace{-0.02in}

\paragraph{Our techniques}


The aforementioned configuration LP provides a useful framework to study efficiency of stable pricing and existence of Walrasian equilibria. Generally speaking, the techniques of LP relaxations and rounding and especially the configuration LP are prevailing instruments in the study of combinatorial auctions.
However, there is a new distinguishing feature in the context of Combinatorial Walrasian Equilibrium, as we have to determine a proper partition of the market into bundles. One may think of every bundle in a partition as imposing a set of additional linear constraints on the configuration LP of the initial (unpartitioned) market. In order to meet the goal of achieving a nearly efficient and stable state, we need to cast a carefully chosen set of those new constraints which push the solution of a derived LP to be achieved in an integral point corresponding to a proper allocation.

The main combinatorial tool employed in our design and analysis resembles techniques taken from the theory of stable matching.
In particular, our scheme proceeds in a fashion that is similar to the Gale-Shapley algorithm~\cite{Gale62}, with bidders and items residing in the two sides of the market, and bidders ``making proposals" to the items.
During the procedure, the price of each item reflects the item's preferences over the bidders and it keeps growing monotonically.
Meanwhile, the choice of every bidder becomes scarcer and at greater expenses.
Finally, the resulting allocation of buyers to bundles may be viewed as a matching, since every allocated set of items and/or bundles may be further treated as a single big bundle.

Despite the similarities to the Gale-Shapley algorithm, there are several important aspects that distinguish our setup from the standard setting of stable matching.
Firstly, our combinatorial auction model allows for bidders to demand {\em sets} of items.
As a result, bundles demanded by the bidders may overlap in a complex way, which makes our task of resolving conflicts on the over-demanded items incomparably more difficult than for the unit demand valuations in any matching setup.
Secondly, the stable matching framework assumes no money in the market, while in our setting prices play a crucial role to guarantee stability. Finally, our routine begins with an initial allocation which is provided as part of the input and serves as a benchmark against which to compare the obtained social welfare.
The initial allocation is indeed necessary if one is looking for an efficient implementation, due to the strong NP-hardness results on social-welfare approximation in combinatorial auctions.

The ascending-price nature of our basic algorithm leads to a potentially exponential runtime, as prices may climb slowly toward a stable profile.  To address this problem, we must aggressively raise prices to ``interesting'' breakpoints.  We then analyze the structure of the agents' demands at these maximal price profiles, and find that by resolving the demands of agents in a particular order we can ensure that steady progress is made toward a final solution, leading to a polynomial runtime.

To construct a CWE with high revenue, a natural approach is to impose \emph{reserves}: lower bounds on bundle prices.  However, manipulating prices in this way can affect the structure of a final equilibrium in non-trivial ways, so that it is not clear that revenue will ultimately increase.  To circumvent this issue, we begin with a CWE with high welfare, then modify prices by adding a constant amount to the price of \emph{each} bundle.  This operation is conceptually similar to imposing a reserve, but does not fundamentally change the structure of a stable allocation.  Our approach to maximizing revenue then reduces to tuning the extent of this flat price increase.

\vspace{-0.02in}

\paragraph{Related Work}
The study of pricing equilibria in markets and related concepts of outcome fairness have a rich history in theoretical economics.  Some of the earliest work in this spirit of envy-freeness is due to Foley \cite{Foley1967} and Varian \cite{Varian1974}.
An envy-free outcome is one where no agent wishes to exchange outcomes with another.
The line of work on market-clearing prices in our market assignment problem was initiated by Shapley and Shubik \cite{Shapley1971}.  Characterizations of existence of Walrasian equilibria were studied in, for example, \cite{Aumann1975,Kelso1982,Leonard1983,Bikhchandani1997,Gul1999}.


An alternative line of work considers markets with non-linear bundle prices.
Such package auctions were formalized by Bikhchandani and Ostroy \cite{Bikhchandani2002}.  Applications to combinatorial auctions include mechanisms due to Ausubel and Milgrom \cite{Ausubel2002}, Wurman and Wellman \cite{Wurman2000}, and Parkes and Ungar \cite{Parkes2000}.
Our notion of CWE differs in that the seller commits to a partition of the objects, then sets linear prices over those bundles.




The problem of computing revenue-optimal envy-free prices has received recent attention in the computer science literature.  Guruswami et al. \cite{Guruswami2005} provide approximation algorithms for envy-free pricing in certain special cases, leading to a line of work improving on the attainable approximation factors \cite{Balcan2008, Cheung2008, Hartline2011} and a polylogarithmic lower bound \cite{Briest06}.
Mu'alem \cite{MuAlem2009} studies the revenue maximization question for agents with general types.  The notion of envy-freeness has also been applied to problems in machine scheduling \cite{Cohen2010}.


Fiat et al. \cite{Fiat2009} considered an extension of envy-freeness in which no agent envies any subset of other agents.
This concept is related to our notion of CWE.
However, crucially, they restrict their definition to agents with single-minded types, which dampens the distinction between multi-envy freeness and envy-freeness. 



A significant line of work in the algorithmic mechanism design literature is concerned with the development of truthful mechanisms for combinatorial markets.  See, for example, \cite{LOS99, Archer2001, HKNS2004, DDDR-08, DNS-06, PSS-08} and references therein.  The goal in this work is to develop algorithms that elicit truthful value revelation from the bidders.  In contrast, we assume a full-information model
and our goal is to develop an algorithmic pricing structure that satisfies certain transparency and fairness conditions.

Some of our algorithms make use of \emph{demand queries}, a manner of eliciting preference information from bidders with complex valuations.
For representative works on the power of demand queries, 
see \cite{Blumrosen2005, Mirrokni2008, Dobzinski2011, Oren2012}. 


Fu, Kleinberg and Lavi \cite{FKL-12} introduced the notion of \emph{conditional equilibrium} as a WE relaxation, where no buyer wishes to add additional items to his allocation under the given prices, but may wish to drop ones. They show that, when buyers have submodular valuations, a conditional equilibrium always exists and every conditional equilibrium achieves at least half of the optimal social welfare.  While their work is similar in spirit to the results herein, our equilibrium concept differs fundamentally in that it does not relax the requirement that every agent receives a bundle in his demand set. In particular, the conditional equilibrium notion is quite weak from the agents' happiness perspective. In particular, it violates basic envy-freeness conditions, even for sub-modular valuations.


\section{Model and preliminaries}
\label{sec:model}

%

We consider an auction framework with a set $\items$ of $\nitem$ indivisible objects and a set of $n$ agents. Each agent has a valuation function $\vali(\cdot) : 2^\items \to \reals_{\geq 0}$ that indicates his value for
every set of objects, is non-decreasing (i.e., $\vali(S) \leq \vali(T)$ for every $S \subseteq T \subseteq \items$) and is normalized so that $\vali(\emptyset) = 0$.
The profile of agent valuations is denoted by $\vals=(\val_1,\dotsc,\val_\nagent)$, and an auction setting is defined by a tuple $\auction=(\items,\vals)$.

A price vector $\prices=(\price_1, \dotsc, \price_\nitem)$ consists of a price $\price_j$ for each object $j \in M$. An {\em allocation} is a vector of sets $\allocs = (\alloc_0, \alloc_1, \dotsc, \alloc_\nagent)$, where $\alloc_i \cap \alloc_k = \emptyset$ for every $i\neq k$, and $\bigcup_{i=0}^{\nagent}\alloci = \items$. In the allocation $\allocs$, for every $i\in\agents$, $\alloci$ is the bundle assigned to agent $i$, and $\alloc_0$ is the set of unallocated objects; i.e., $\alloc_0 = \items \setminus \bigcup_{i=1}^{\nagent}\alloci$.

As standard, we assume that each agent has a quasi-linear utility function. That is, if agent $i$ is allocated bundle $\alloc_i$ under prices $\prices$, then the utility of agent $i$ is $\utili(\alloci, \prices) = \vali(\alloci) - \sum_{j \in \alloci}\price_j.$
Given prices $\prices$, the {\em demand correspondence} $\demandi(\prices)$ of agent $i$ contains the sets of objects that maximize agent $i$'s utility:
\[
\demandi(\prices) = \left\{S^*:  S^* \in\argmax_{S\subseteq M}\{\utili(S,\prices)\}\right\}.
\]
A tuple $(\allocs,\prices)$ is said to be {\em stable} for auction $\auction=(M,\vals)$
if $\alloci \in\demandi(\prices)$ for every $i\in\agents$.
A price vector $\prices$ is {\em stable} if there exists an allocation $\allocs$ such that $(\allocs,\prices)$ is stable.
An allocation $\allocs$ is {\em stable} if there exists a pricing $\prices$ such that $(\allocs,\prices)$ is stable.

For a partition $\partition=(\partition_1, \ldots, \partition_k)$ of the item set $M$ we slightly
abuse notation and denote by $\partition=\{\partition_1, \ldots, \partition_k\}$ the reduced set of items, where the
valuation of each agent $i$ of a subset $S \subseteq \partition$ is $\vali(\bigcup_{j:\partition_j \in S}\partition_j)$. We
denote by $\auction_{\partition}$ an auction over the reduced set of items $\partition$ with the induced valuation profile.

Every allocation $\allocs$ induces a partition of the objects, $\partition(\allocs) = (\alloc_0, \ldots, \alloc_\nagent)$, where $\alloc_0$
denotes the unallocated objects.
A tuple $(\allocs,\prices)$, where $\allocs=(\alloc_0, \ldots, \alloc_\nagent)$, and $\pricei$ is the price of $\alloci$ for every $\alloci\neq\emptyset$, is a {\em Combinatorial Walrasian Equilibrium} (CWE) if $(\allocs,\prices)$ is stable in the auction $\auction_{\partition(\allocs)}$.
Clearly, $\alloc_0$ may be an empty set, in which case no item remains unallocated (i.e., the market clears). Allowing for $\alloc_0$ to be non-empty is essentially the relaxation of the market clearance condition. An allocation $\allocs$ is said to be CWE if it admits a price vector $\prices \in \reals_{\ge 0}^{\nagent+1}$ such that $(\allocs,\prices)$ is CWE. 

\paragraph{Relation to WE}

A tuple $(\allocs,\prices)$, where $\allocs=(\alloc_0,\alloc_1, \ldots, \alloc_n)$ and $\prices=(\price_1, \ldots, \price_m)$, is a {\em Walrasian equilibrium} (WE) if $(\allocs,\prices)$ is stable in $\auction$ and $\price_j=0$ for every item $j \in \alloc_0$.
When the latter condition is satisfied, we also say that $(\allocs,\prices)$ {\em clears the market}.
We emphasize that a CWE is weaker than a WE in two ways:
first, it allows for market reduction by means of bundling; second, it does not require market clearance, so items with positive price can be left unsold.


\subsection{Characterization of WE}

We will make use of the following characterization of an allocation that can be supported in a WE~\cite{Bikhchandani1997}.
For a given partition $\partition$ of the objects, the allocation of $\partition$ to $\agents$ can be specified by a set of
integral variables $y_{_{i,S}}\in \{0,1\}$, where $y_{_{i,S}}=1$ if the set $S\subseteq\partition$ is allocated to agent $i\in\agents$ and
$y_{_{i,S}}=0$ otherwise.
These variables should satisfy the following conditions: $\sum_S y_{_{i,S}}\leq 1$ for every $i \in \agents$
(each agent is allocated to at most one bundle) and $\sum\limits_{i, S \supseteq \partition_j} y_{_{i,S}} \leq 1$ for every $\partition_j \in \partition$ (each element of the partition is allocated to at most one agent).
A \emph{fractional allocation} of $\partition$ is given by variables $y_{_{i,S}} \in [0,1]$ that satisfy the same conditions and
intuitively might be viewed as an allocation of divisible items.
The configuration LP for $\auction_{\partition}$ is given by the following linear program, which computes
the fractional allocation that maximizes social welfare.
\begin{eqnarray}
\max & & \sum_{i,S} \vali(S)\cdot y_{_{i,S}} \nonumber\\
\mbox{s.t.} & & \sum_S y_{_{i,S}} \leq 1 \mbox{ for every } i \in \agents \nonumber\\
& & \sum_{i, S \supseteq \partition_j} y_{_{i,S}} \leq 1 \mbox{ for every } \partition_j \in \partition \nonumber\\
& & y_{_{i,S}} \in [0,1] \mbox{ for every } i \in \agents, S \subseteq \partition \nonumber
\end{eqnarray}
The characterization given in \cite{Bikhchandani1997} states that a WE exists if and only if the optimal fractional
solution to the allocation LP occurs at an integral solution.

%

\subsection{Stable item pricing}
\label{sec:item-pricing}

The proposed concept of CWE is weaker than the concept of WE both in that it allows to restrict the item set by bundling and in that it does not require market clearance. Clearly, relaxing any one of these conditions is sufficient to guarantee existence of a stable allocation.
This begs the question whether it is possible to achieve good guarantees on the market efficiency (social welfare) by relaxing only one of these conditions.
In this section, we show that relaxing the market clearance condition alone is not sufficient, in general.
In particular, for several families of valuation functions, we establish strong lower bounds on the social-welfare approximation that can be achieved in a stable item-priced allocation.
These results reinforce the need for bundling, as captured by the CWE notion.

{\bf Unit-demand \& single-minded valuations.}
Consider the following auction: bidder 1 is a unit-demand agent, who values any non-empty subset of the items at $1+\epsilon$; bidder 2 is a single-minded agent, who desires the set of all items for a value of $m$.
In the optimal allocation all $m$ items must go to agent 2 resulting in a SW of $m$. However, in every stable pricing $\prices$ that supports this allocation, there exists an item $j \in [m]$ such that $\price_j\le 1$ (otherwise the set $[m]$ cannot be a demand set of agent 2).
But this in turn implies that $j$ is demanded by agent 1. Therefore, every stable allocation assigns a single item to agent 1 for a SW of $1+\epsilon$, compared to the optimal SW of $m$, and the linear gap follows.

It might not come at a surprise that item prices are not sufficient to obtain high welfare if valuations are super-additive.
After all, if items are complementary to each other, then bundling is an intuitive operation.
Surprisingly, the next example shows that a linear gap may exist even if all valuations are sub-additive.

{\bf XOS (fractionally-subadditive) valuations.}
Consider an auction with $m$ items and two agents with the following symmetric XOS valuations.
Agent 1 is unit-demand and values every subset at $1/2-\delta$, for a sufficiently small $\delta$ (that will be determined soon).
Agent 2 values any subset of size $k$ at $\max(1,k/2)$; it is easy to verify that this is an XOS valuation.
We claim that there is no stable pricing that sells more than a single item. For every $m \geq 2$, the optimal integral solution obtains a value of $m/2$ (by giving all the items to the XOS agent).
By the characterization given in \cite{Bikhchandani1997} (see also Section~\ref{sec:model}), this allocation admits a stable pricing if and only if $m/2$ is the optimal fractional solution of the corresponding configuration LP. We will now show a fractional solution that obtains value greater than $m/2$ for every $\delta < \frac{1}{2(m-1)}$.
Consider the fractional solution in which the allocation of the first (unit demand) agent is given by $y_{1,\{j\}}=1/m$ for every $j \in [m]$, and the allocation of the second (XOS) agent is given by $y_{2,\{j\}}=\frac{1}{m(m-1)}$ for every $j \in [m]$, and $y_{2,[m]}=\frac{m-2}{m-1}$.
One can easily verify that this is a feasible solution, and the welfare obtained by $\{y_{i,S}\}$ is given by $SW(y)=\frac{m}{2}+\frac{1}{2(m-1)}-\delta$,
which is greater than $\frac{m}{2}$ for every $\delta < \frac{1}{2(m-1)}$, as required. We conclude that a stable outcome can allocate at most one object, and thus the highest welfare that can be obtained in a stable allocation is $1$, resulting in a linear gap of $m/2$.

\subsection{Efficiency loss due to CWE: A lower bound}
\label{subsec:gap}
In this section we prove a lower bound on the efficiency of combinatorial Walrasian equilibria. In particular, we show that there are instances in which no CWE obtains more than a $(\sfrac{2}{3}+\epsilon)$ fraction of the optimal social welfare, for every $\eps > 0$. Consider an auction with $3$ items and $3$ bidders. Each agent $i$ has the valuation $\vali(\{i\})=1$ and $\vali(\{1,2,3\}\setminus\{i\})=2+\eps$. 
The optimal allocation has a social welfare of $3$ with each object $i\in\{1,2,3\}$ being allocated to agent $i$. 
Any CWE that supports this allocation must, in fact, be a WE (since no items are bundled and all items are allocated).
By the characterization given in \cite{Bikhchandani1997}, such a WE exists only if $3$ is the optimal fractional solution of the corresponding configuration LP, but this is not the case:
the fractional solution $y_{1,\{2,3\}}=y_{2,\{1,3\}}=y_{3,\{1,2\}}=\sfrac{1}{2}$ obtains a total value of $3+\sfrac{3\eps}{2} > 3$.
%
%
Therefore, in any CWE either one item is unsold or at least two items are bundled together. In either case the social welfare cannot exceed $2+\eps.$ Thus, no CWE in this market can approximate the optimal social welfare within a factor better than $\frac{2+\eps}{3}$ (for an arbitrary small $\eps>0$).

\section{Social welfare approximation}
\label{sec:sw}


In Section~\ref{subsec:gap} we established a lower bound of $1.5$ on the social welfare approximation that can be achieved at a CWE.
In this section we show that there always exists a CWE that gives a $2$-approximation to the optimal social welfare for arbitrary valuations.

Our algorithm is a natural extension of the t\^{a}tonnement process that is used to achieve a Walrasian equilibrium for (gross) substitutes valuations.
In a traditional t\^{a}tonnement process, prices are initially set to zero, agents iteratively respond to the current prices with their demand, and prices of over-demanded items increase monotonically.
In our case, in addition to increasing prices of over-demanded items, the market orchestrator will make use of an additional tool ---  merging bundles.
These operations will be {\em monotone}: prices increase monotonically, and bundles never break apart.

Our algorithm is defined formally as Algorithm \ref{cwe-alg}. Informally, the algorithm begins by bundling objects according to an initial allocation ($\talloc$ in the statement of Theorem \ref{th:halfSW}) and setting properly-designed initial prices (specifically, pricing every $\talloc_j$ at $\val_j(\talloc_j)/2$).
It maintains a pool of buyers who are not allocated a demanded set (initially all buyers).
The algorithm iteratively chooses a buyer from the pool and asks for his most-demanded set.
Whenever a buyer's demand set $S$ contains more than one item, the items in $S$ are bundled together (irrevocably), the bundle $S$ is allocated to him, and any agents who were allocated subsets of $S$ are deallocated and placed back in the pool. It should be noted that with our ``aggressive'' bundling, setting initial prices too low (as in standard t\^{a}tonnements) can lead to a big welfare loss. Therefore, the initial prices must be carefully chosen. If the demanded set is a singleton that is already allocated to another agent, the conflict is resolved (in subprocedure \textbf{ResolveConflict}) by gradually increasing the price of the item, until it is not demanded by one of these agents.  The algorithm terminates when all agents' demands are satisfied.


We now present a formal description of the process that is simple and intuitive, but may run in exponential time (in $m$ and $n$).
In Section \ref{sec:poly} we present a refined algorithm that obtains the same result with a poly-time implementation.


\begin{algorithm}
\caption{Simple CWE Algorithm} \label{cwe-alg}

\textbf{Input: } Valuations $\vals$; initial allocation $\tallocs=(\talloc_1,\dotsc,\talloc_\nagent)$ \\
\textbf{Output: } A CWE $(\allocs,\prices)$

\begin{algorithmic}[1]
\STATE Initialize: $\partition = \{\talloci\ :\ \talloci \neq \emptyset\}$; $\price(\talloci) = \frac{1}{2}\vali(\talloci)$ for all $i$; $\alloci = \emptyset$ for all $i$; $\Pool = \agents$

\WHILE{$\Pool \neq \emptyset$}
   \STATE Remove an arbitrary element $a$ from $\Pool$
   \IF{$\demand_a(\prices,\partition) \neq \emptyset$}
       \STATE Choose $S\in\demand_a(\prices,\partition)$
       \STATE $\alloc_a \leftarrow S$

       \IF{$|S|>1$}
           \STATE Set $\price(S):=\sum_{\partition_j\in S}\price(\partition_j)$
	
	\FOR{$i$ such that $\alloci \in S$}
	    \STATE $\alloci \leftarrow \emptyset$
	    \STATE $\Pool \leftarrow \Pool \cup \{i\}$
	\ENDFOR
	\STATE $\partition \leftarrow \Bundle(\partition, S)$\quad $\backslash * \partition:=\{\partition_j\!:\!\partition_j\not\in\!S\}\cup\{S\}*\backslash$
       \ELSE
	\IF{$\exists b \neq a$ such that $\alloc_b = S$}
	   \STATE $\ResolveConflict(S,a,b)$
	\ENDIF
       \ENDIF
   \ENDIF
\ENDWHILE
\STATE \textbf{Return } $(\allocs,\prices)$
%
%
\end{algorithmic}

\vspace{2mm}

%

\textbf{ResolveConflict}$(S,a,b)$:
\begin{algorithmic}[1]
\WHILE{$S \in \demand_a(\prices,\partition) \cap \demand_b(\prices,\partition)$}
	\STATE $\price(S) \leftarrow \price(S) + \epsilon$
\ENDWHILE
\STATE \textbf{if} $S \not\in \demand_a(\price,\partition)$ \textbf{then} $\alloc_a \leftarrow \emptyset; \ \Pool \leftarrow \Pool \cup \{a\}$
\STATE \textbf{else} $\alloc_b \leftarrow \emptyset; \ \Pool \leftarrow \Pool \cup \{b\}$
%
\end{algorithmic}
\end{algorithm}



%
%
%
%
%
%
%
%
%
%
%
%
%
%
%
%
%
%
%

\begin{theorem}
\label{th:halfSW} Given an initial allocation $\tallocs$, 
Algorithm \ref{cwe-alg}
computes a CWE $(\allocs,\prices)$ such that $\SW(\allocs) \geq \frac{1}{2}\SW(\tallocs)$.
\end{theorem}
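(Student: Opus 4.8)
The plan is to prove three facts about the output $(\allocs,\prices)$: (i) the algorithm halts, (ii) $(\allocs,\prices)$ is a CWE, and (iii) $\SW(\allocs)\ge\tfrac12\SW(\tallocs)$; we may assume without loss of generality that $\tallocs$ leaves no item unallocated (otherwise add each unallocated item as a singleton bundle priced at $0$, which affects neither the argument nor $\SW(\tallocs)$). Halting is the least delicate part: bundling is monotone, so at most $O(\nitem)$ merges ever occur; between consecutive merges only \textbf{ResolveConflict} changes prices, and only upward, and since the price of any bundle can never exceed $\max_i\vali(\items)$ while each invocation raises some price by at least $\epsilon$, only finitely many conflicts occur overall; finally, between two such structural events at most $\nagent$ ``clean'' pops — ones that neither merge nor conflict — can occur before $\Pool$ empties. (One must ensure \textbf{ResolveConflict} does not overshoot the breakpoint at which exactly one contestant drops $S$ — take $\epsilon$ small enough, or raise directly to that breakpoint — so that the surviving owner still demands $S$ afterward.)

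For (ii) I would maintain the invariant that every agent currently outside $\Pool$ holds a bundle in its demand correspondence at the current prices and partition. A clean pop trivially preserves this; a merge preserves it because coarsening a partition only deletes options (with matching prices) while keeping the agent's held bundle available; and a \textbf{ResolveConflict} price increase preserves it because the raised bundle is not contained in the allocation of any untouched agent, so only the utilities of that agent's \emph{alternatives} drop. When the loop halts $\Pool=\emptyset$, so $(\allocs,\prices)$ is stable in $\auction_{\partition^*}$ for the final partition $\partition^*$; since $\partition(\allocs)$ is merely a coarsening of $\partition^*$ (the unsold parts get lumped into $\alloc_0$) and linear prices agree under coarsening, stability carries over to $\auction_{\partition(\allocs)}$, which is exactly the CWE condition.

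The welfare bound (iii) rests on two observations. First, a price invariant proved by induction over the run: for every bundle $B$ of the current partition, $\price(B)\ge\tfrac12\sum_{i:\,\talloc_i\subseteq B}\vali(\talloc_i)$. It holds with equality at initialization, survives \textbf{ResolveConflict} (prices only rise), and survives a merge because the new bundle's price is the sum of the merged bundles' prices while the original $\talloc_i$'s contained in it distribute among those merged bundles (each $\talloc_i$ lies in a single current bundle, as the partition only ever coarsens). Second, a structural fact: each agent holds at most one current bundle at all times, and — tracing the only two ways an allocation dissolves, a merge (the enlarged bundle goes to the proposer) and \textbf{ResolveConflict} (the contested singleton goes to the winner) — every bundle that has ever been allocated stays covered by an allocated bundle thereafter. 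Hence a bundle that is a distinct part of $\partition^*$ yet unsold was never allocated; consequently it is an original $\talloc_i$ that was never merged and never entered a conflict, so its price is still exactly $\tfrac12\vali(\talloc_i)$.

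Finally I would combine these. Write $\vali(\alloc_i)=\utili(\alloc_i,\prices)+\price(\alloc_i)$. By the price invariant (and linearity of prices), $\price(\alloc_i)\ge\tfrac12\sum_{k:\,\talloc_k\subseteq\alloc_i}\vali[k](\talloc_k)$. Let $G$ be the set of agents whose original bundle $\talloc_i$ is left unsold in $\allocs$; by the structural fact each such $\talloc_i$ is still a bundle of $\partition^*$ priced $\tfrac12\vali(\talloc_i)$, so since $\alloc_i$ is in agent $i$'s demand set, $\utili(\alloc_i,\prices)\ge\utili(\talloc_i,\prices)=\vali(\talloc_i)-\tfrac12\vali(\talloc_i)=\tfrac12\vali(\talloc_i)$. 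Summing over all $i\in\agents$, the payment terms collect $\tfrac12\sum_{k\notin G}\vali[k](\talloc_k)$ (each allocated $\talloc_k$ counted once, since it lies in a unique part of $\partition^*$), and the utility terms collect at least $\tfrac12\sum_{i\in G}\vali(\talloc_i)$; since $G$ and $\agents\setminus G$ partition $\agents$, the total is at least $\tfrac12\sum_i\vali(\talloc_i)=\tfrac12\SW(\tallocs)$. I expect the main obstacle to be precisely the two observations in the previous paragraph — in particular the realization that the original bundles that get ``lost'' are exactly the \emph{untouched} ones, so their still-halved price lets each such agent's own utility make up the deficit, and that the agents charged through payments and those charged through their own utility form disjoint sets. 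Termination and the demand invariant for stability are comparatively routine.
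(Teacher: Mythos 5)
Your proposal is correct and follows essentially the same route as the paper's proof: the same decomposition $\vali(\alloci)=\utili(\alloci,\prices)+\price(\alloci)$, the same invariant that prices dominate the half-values of the original bundles they contain (initialization at $\tfrac12\vali(\talloci)$, additivity under merges, monotone increases), and the same key observation that any unsold part of the final partition was never allocated, hence is an untouched $\talloc_i$ still priced at $\tfrac12\vali(\talloc_i)$, so its owner's utility under stability makes up that half-value. The extra care you take with the $\epsilon$-breakpoint in \textbf{ResolveConflict} and with coarsening the final partition into $\partition(\allocs)$ are fine refinements of points the paper treats implicitly, not a different argument.
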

\begin{proof}

First, it is easy to see that the procedure must terminate.
Indeed, bundles monotonically merge and prices monotonically increase.
Moreover, on every iteration of the procedure, either a bundle merges (lines 7-12), a price strictly increases (lines 14-15 and \ResolveConflict), or the size of the pool strictly decreases (otherwise).
Thus, assuming fixed price increments of $\eps$, the algorithm is guaranteed to terminate. 


Second, upon termination, the obtained allocation and prices is a CWE.
Indeed, every agent that is removed from the pool receives his most desired bundle given the current prices,
and agents that are deallocated due to some other agent's demand go back to the pool.
Thus, when the pool becomes empty, every agent receives his demand at the current prices, and the final allocation and pricing is CWE.



It remains to show that the returned CWE obtains at least half the social welfare of the original allocation $\tallocs$.
Let $U$ be the set of agents that get non-empty allocations in $\allocs$; i.e., $U=\{i: \alloc_i \neq \emptyset\}$.
Then, the social welfare is given by
\begin{align}
\label{eq:SW}
\sum_{i\in U}\vali(\alloci)&=\sum_{i\in U}(\vali(\alloci)-\price(\alloci))+\sum_{i\in U}\price(\alloci)\nonumber\\
&=\sum_{i\in U}\utili(\alloci)+\sum_{i\in U}\price(\alloci)\nonumber\\
&\ge\sum_{i\in U}\utili(\alloci)+\sum_{i\in U}\sum_{\talloc_j\subset\alloci}\frac{1}{2}\val_j(\talloc_j),
\end{align}
where the last inequality follows directly from the following facts:
(i) every bundle $\talloc_j$ is originally priced at $\frac{1}{2}\val_j(\talloc_j)$,
(ii) the price of a bundle that is created by a merge of two bundles equals the sum of their prices, and
(iii) a bundle's price can only increase.


The second term in the RHS of \eqref{eq:SW} captures the welfare that comes from bundles $\talloc_j$ that are being allocated in $\allocs$.
We next need to take care of the welfare that comes from $\talloc_j$'s that are not being allocated in $\allocs$.
We first observe that every bundle that is allocated in our procedure, keeps being allocated until termination.
This is because a bundle can be deallocated from one agent only if it is being allocated to another agent.
Therefore, every bundle $\talloc_j$ that is not allocated in $\allocs$ has never been allocated, and therefore still has a price of $\frac{1}{2}\val_j(\talloc_j)$ (as priced originally).

Given the last observation, we conclude that for every bundle $\talloc_j$ that is not allocated in $\allocs$, it must be that $j \in U$.
Indeed, if $j \not\in U$, then agent $j$ would gained a utility of $\val_j(\talloc_j)-\price(\talloc_j) \geq \val_j(\talloc_j)-\frac{1}{2}\val_j(\talloc_j) \geq 0$ from the bundle $\talloc_j$, contradicting the fact that $\emptyset\in\demand_j(\prices)$.

However, the agent $j \in U$ who has been allocated the bundle $\talloc_j$ in $\tallocs$ is allocated in the CWE $\allocs$ another bundle which is preferred by him.
This means that agent $j$'s utility from $\alloc_j$ is at least $\frac{1}{2}\val_j(\talloc_j)$.

Since all unallocated bundles $\talloc_j$ were allocated in $\tallocs$ to agents $i \in U$,
summing over all these bundles, we get
\begin{equation}
\label{eq:SWder2}
\sum_{i\in U}\utili(\alloci) \ge \sum_{\talloc_j\notin\bigcup_{i\in U}\alloci} \frac{1}{2}\val_j(\talloc_j).
\end{equation}
By plugging the last inequality in \eqref{eq:SW} we obtain
\begin{equation*}
\sum\limits_{i\in U}\vali(\alloci)\ge\frac{1}{2}\sum\limits_{\talloc_j}\val_j(\talloc_j)=\frac{1}{2}\SW(\talloc)
\end{equation*}
and the assertion of the theorem follows.
\end{proof}

Applying \CWEalg\ to the optimal allocation $\tallocs$ we derive the following corollary.
\begin{corollary}
\label{cor:halfOpt}
For every valuation profile $\vals$, there exists a CWE that obtains at least half of the optimal social welfare.
\end{corollary}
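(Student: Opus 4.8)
The plan is to instantiate Theorem~\ref{th:halfSW} with the welfare-optimal allocation as its input; the theorem itself does all the heavy lifting, so essentially nothing remains to prove.

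Concretely, I would fix $\tallocs$ to be an allocation of $\items$ to $\agents$ that maximizes $\SW(\cdot)$. Such an allocation exists since there are only finitely many allocations (each agent receives one of the $2^{\nitem}$ subsets of $\items$), and by definition $\SW(\tallocs)=\OPT$, the optimal unconstrained social welfare. Running Algorithm~\ref{cwe-alg} on this $\tallocs$ and applying Theorem~\ref{th:halfSW} produces a CWE $(\allocs,\prices)$ with $\SW(\allocs)\ge\frac{1}{2}\SW(\tallocs)=\frac{1}{2}\,\OPT$, which is precisely the assertion.

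There is no genuine obstacle here, but two remarks are in order. First, the guarantee is purely existential: computing a welfare-optimal $\tallocs$ is NP-hard for general valuations, so this argument does not by itself yield a polynomial-time algorithm. Second, because Theorem~\ref{th:halfSW} accepts an \emph{arbitrary} initial allocation, the same reasoning applied to an $\alpha$-approximately optimal allocation yields a CWE within a factor $2\alpha$ of the optimum; this is the black-box reduction from welfare-approximation to CWE-computation highlighted in the introduction, whose polynomial-time implementation is taken up in Section~\ref{sec:poly}.
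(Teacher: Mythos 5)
Your proposal matches the paper exactly: the paper derives the corollary by applying the \CWEalg\ (Theorem~\ref{th:halfSW}) to the welfare-optimal allocation $\tallocs$, which is precisely your argument. The additional remarks about existence versus polynomial-time computability and the black-box reduction are accurate but not needed for the corollary itself.
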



\section{Poly--time implementation}
\label{sec:poly}

Here we discuss how one could implement our procedure efficiently in a polynomial number of demand queries to the agents. We note that in our context demand queries are indeed unavoidable, as agents must know their demand sets in order just to verify whether a given outcome is stable.
We also emphasize that our procedure takes as input an initial target allocation Y; the problem of finding an initial allocation Y is outside the scope of our procedure.  However, we can think of Y as being generated from some approximation algorithm tailored to a particular class of valuations.

The polytime algorithm is given as Algorithm \ref{cwe-alg-poly}. Informally speaking, Algorithm \ref{cwe-alg-poly} makes two main changes to the more straightforward Algorithm \ref{cwe-alg}.  First, after each main iteration of the algorithm, we invoke procedure $\RaisePrices$ to increase the prices of the allocated bundles; prices are raised as much as possible without changing the demand set of any bidder with an allocation.  Second, whenever the demand set of a player from the pool is a singleton currently allocated to another bidder $b$, we immediately award the allocation to $a$.  We then ensure that $b$ is the next bidder to be considered, and allocate to $b$ a particular set from his demand correspondence, which is determined during the previous call to $\RaisePrices$.

\begin{algorithm}[H]
\caption{Polytime CWE Algorithm} \label{cwe-alg-poly}

\textbf{Input: } Valuations $\vals$; target allocation $\tallocs=(\talloc_1,\dotsc,\talloc_\nagent)$ \\
\textbf{Output: } A CWE $(\allocs,\prices)$

\begin{algorithmic}[1]
\STATE Initialize: $\partition = \{\talloci\ :\ \talloci \neq \emptyset\}$; $\price(\talloci) = \frac{1}{2}\vali(\talloci)$ for all $i$; $\alloci = \emptyset$ for all $i$; $\Pool = \agents$; $\Reject = \emptyset$; $T_i = \emptyset$ for all $i$

\WHILE{$\Pool \neq \emptyset$}
   \STATE Remove an arbitrary element $a$ from $\Pool$
   \IF{$\util_a(S,\prices) \leq 0$ for each $S \in \demand_a(\prices,\partition)$}
       \STATE $\Reject \leftarrow \Reject \cup \{a\}$
   \ELSE
       \STATE Choose $S\in\demand_a(\prices,\partition)$
       \STATE $\AllocateDemand(a,S)$
   \ENDIF
   \STATE $\RaisePrices()$
\ENDWHILE
\end{algorithmic}

\vspace{2mm}

$\AllocateDemand(a,S)$:
\begin{algorithmic}[1]
	\IF{$|S|>1$}
	   \FOR{$i$ such that $\alloci \in S$}
	      \STATE $\alloci \leftarrow \emptyset$
	      \STATE $\Pool \leftarrow \Pool \cup \{i\}$
	   \ENDFOR
	   \STATE $\partition \leftarrow \Bundle(\partition, S)$\quad $\backslash * \partition:=\{\partition_j\!:\!\partition_j\not\in\!S\}\cup\{S\}*\backslash$
	   \STATE $\price(S) \leftarrow \sum_{\partition_j\in S}\price(\partition_j)$
	   \STATE $\alloc_a \leftarrow S$
	\ELSE
	   \STATE $\alloc_a \leftarrow S$
	   \IF{$\exists b \neq a$ such that $\alloc_b = S$}
	      \STATE $\alloc_b \leftarrow \emptyset$
	      \STATE $\AllocateDemand(b, T_b)$ \quad $\backslash *$ Attempt to allocate $T_b$ (from $\RaisePrices$) to $b$. $*\backslash$
              \ENDIF
	\ENDIF
%
%
\end{algorithmic}

\vspace{2mm}

%

$\RaisePrices()$:
\begin{algorithmic}[1]
\STATE Initialize: $\RaisedAgents \leftarrow \{i \colon \alloci \neq \emptyset \}$
\WHILE{$\RaisedAgents \neq \emptyset$}
   \FOR{$i \in \RaisedAgents$}
      \STATE choose $S_i \in \demand_i(\prices,\partition\backslash\{ \alloc_j \colon j \in \RaisedAgents \})$ \quad $\backslash *$ Demand excluding allocations to agents in $\RaisedAgents$ $*\backslash$
      \STATE $d_i \leftarrow \utili(\alloci,\prices) - \utili(S_i,\prices)$
   \ENDFOR
   \STATE $a \leftarrow \argmin_{i\in \RaisedAgents} d_i$
   \FOR{$i \in \RaisedAgents$}
      \STATE $\price(\alloc_i) \leftarrow \price(\alloc_i) + d_a$
   \ENDFOR
   \STATE $T_a \leftarrow S_a$ \quad\quad\quad $\backslash *$ $T_a$ is the set demanded by $a$, excluding allocations to agents in $\RaisedAgents$ $*\backslash$
   \STATE $\RaisedAgents \leftarrow \RaisedAgents - \{a\}$
\ENDWHILE
%
\end{algorithmic}
\end{algorithm}

We first note that, like Algorithm \ref{cwe-alg}, Algorithm \ref{cwe-alg-poly} is monotone in the following sense:

\noindent{\scshape (Monotonicity).} {\em Over the course of Algorithm \ref{cwe-alg-poly}, prices only increase and no bundle is ever split.  Moreover, once a bundle is allocated it never become unallocated.}

Also like Algorithm \ref{cwe-alg}, after an invocation of AllocateDemand completes, each bidder that isn't in $\Pool$ is allocated a demanded set.

\begin{lemma}
\label{lem.demanded}
After a call to $\AllocateDemand$ terminates, each bidder $i \not\in \Pool$ is allocated to his most demanded set.
\end{lemma}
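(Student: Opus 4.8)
The plan is to prove the invariant by tracking what $\AllocateDemand$ does to allocations and the pool, using the recursive structure of the procedure together with the key fact (established during the preceding $\RaisePrices$ call) that the stored set $T_b$ is a demanded set for $b$ after prices are raised. The claim to maintain is: at the moment $\AllocateDemand$ returns, every bidder $i \notin \Pool$ has $\alloci \in \demandi(\prices,\partition)$.

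First I would argue by induction on the depth of the recursive calls to $\AllocateDemand$ (the recursion is bounded since each nested call is triggered by a singleton conflict, and bundles only merge). The base case is the innermost call, which either (a) allocates a set $S$ with $|S|>1$ to $a$ — after bundling, $\alloc_a = S$ is exactly the set chosen from $\demand_a$, and every agent who held a subset of $S$ has been moved to $\Pool$, so no non-pool agent is left with a stale sub-bundle — or (b) allocates a singleton $S$ to $a$ with no conflicting $b$, so again $\alloc_a = S \in \demand_a$ and nothing else changes. For the inductive step, the interesting case is the singleton $S$ allocated to $a$ where some $b$ currently holds $S$: we set $\alloc_a \leftarrow S$ (which is in $a$'s demand by choice), empty $\alloc_b$, and recursively call $\AllocateDemand(b, T_b)$. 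Here I would invoke the property of $T_b$: during the last completed $\RaisePrices$, $b$ was in $\RaisedAgents$ and $T_b = S_b$ was chosen from $\demand_b(\prices, \partition \setminus \{\alloc_j : j \in \RaisedAgents\})$, and since the final flat price increment applied to all agents in $\RaisedAgents$ was $d_a \le d_b$, the set $T_b$ remains (weakly) demanded by $b$ at the post-$\RaisePrices$ prices; prices have not changed since (we are still inside the same main-loop iteration before the next $\RaisePrices$). Thus the recursive call is being handed a set that is genuinely in $b$'s demand correspondence, so the inductive hypothesis applies to it and, upon its return, $b$ (if not in $\Pool$) is allocated a demanded set, as is every other non-pool agent.

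I would also need to check that agents untouched by the call keep their demanded allocations — this is immediate from monotonicity: the only price changes happen in $\RaisePrices$ (not inside $\AllocateDemand$), and $\AllocateDemand$ only reallocates the bundles directly involved in the (possibly cascading) conflict, pushing any displaced agent into $\Pool$, so no agent outside $\Pool$ silently loses the demand-optimality of its bundle. Combining: at return time, every agent not in $\Pool$ either was not touched (still optimal by the previous invariant and unchanged prices) or was (re)allocated a set drawn from its own demand correspondence at the current prices.

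The main obstacle I expect is the delicate argument that $T_b$ is still a demanded set when $\AllocateDemand(b,T_b)$ is eventually invoked — one must be careful that (i) no bundle appearing in $T_b$ has since been merged into a larger bundle or reallocated in a way that invalidates it, and (ii) the flat price raise by $d_a$ (rather than by $d_b$) is precisely what preserves $T_b \in \demandi[b]$ rather than only guaranteeing it for agent $a$'s own set. Pinning down exactly which prices are "current" at each recursive invocation, and confirming nothing between the $\RaisePrices$ call and the use of $T_b$ has disturbed it, is the crux; the rest follows by the structural bookkeeping above.
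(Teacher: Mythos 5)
Your argument is correct and follows essentially the same route as the paper's proof: induction over the recursive structure of $\AllocateDemand$, with the two cases $|S|>1$ (bundle, push the displaced holders of sub-bundles into $\Pool$) and $|S|=1$ (allocate, recurse on the displaced bidder), noting that prices and the partition are untouched along a recursive chain so untouched non-pool bidders stay on demanded sets. The only difference is one of detail: the paper's proof is terser and simply treats the input set of every (recursive) call as a demanded set of that bidder, deferring the validity of $T_b$ to the $\RaisePrices$ analysis, whereas you verify it explicitly --- correctly in substance, though note the mechanics: $T_b$ is recorded precisely in the iteration of $\RaisePrices$ in which $b$ itself attains the minimum, so the increment is $d_b$ (not some other agent's $d_a\le d_b$), creating the tie $\util_b(\alloc_b,\prices)=\util_b(T_b,\prices)$, after which $\price(\alloc_b)$ and the prices of the bundles in $T_b$ are frozen while all remaining prices only rise, which is exactly why $T_b$ is still demanded when the recursive call uses it.
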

\begin{proof}
If $\AllocateDemand$ is called with bidder $a$, then we have two cases.  If the demanded set for $a$ is $S$ with $|S| > 1$, then $a$ is allocated $S$ and other conflicting bidders are added to Pool, so the result holds inductively.  If $|S| = 1$, then $a$ is allocated $S$, so in particular $a$ is allocated his most demanded set. The call to $\AllocateDemand$ then terminates only if there is no conflicting bidder; in this case the result holds. Note that we have not yet argued that $\AllocateDemand$ will, in fact, terminate.
\end{proof}

We next show that $\RaisePrices$ increases the price vector $\prices$ to a maximal vector such that each bidder $i$ with $\alloci \neq \emptyset$ has $\alloci \in \demand_i(\prices,\partition)$.
\begin{lemma}[Correctness of RaisePrices]
\label{lem.raiseprices}
After each call to $\RaisePrices()$, for each $i$ with $\alloci \neq \emptyset$,
$\pricei(\alloci)$ is the maximal value such that $\alloci \in \demand_i(\prices,\partition)$.
\end{lemma}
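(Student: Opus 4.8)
\textbf{Proof plan for Lemma~\ref{lem.raiseprices}.}

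The plan is to track, throughout a single call to $\RaisePrices()$, the invariant that every agent in $\RaisedAgents$ is currently allocated a demanded set under the running price vector, and that no further uniform price increase on the currently-held bundles can be performed without violating somebody's demand constraint. First I would establish the \emph{entry} invariant: at the moment $\RaisePrices()$ is invoked, every agent $i$ with $\alloci\neq\emptyset$ already has $\alloci\in\demand_i(\prices,\partition)$ — this follows from Lemma~\ref{lem.demanded} (applied to the state after the preceding $\AllocateDemand$ call) together with Monotonicity, which guarantees that in the earlier iterations the allocated bundles were never split and their prices were already pushed to the margin. I would then argue that this property is maintained across the loop. In a given iteration, for each $i\in\RaisedAgents$ we compute $S_i$ in the \emph{restricted} market $\partition\setminus\{\alloc_j:j\in\RaisedAgents\}$ and set $d_i=\utili(\alloci,\prices)-\utili(S_i,\prices)$; the point of excluding the bundles held by $\RaisedAgents$ is that when we raise the price of \emph{all} those bundles by the same amount $d_a$, the utility of any alternative set $S$ that uses none of them drops by nothing, while $\utili(\alloci,\prices)$ drops by exactly $d_a$. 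Hence after the increase, $\utili(\alloci,\prices)-\utili(S_i,\prices)=d_i-d_a\ge 0$ for every $i$ still in $\RaisedAgents$ (since $a=\argmin d_i$), and $=0$ for $i=a$. So $\alloci$ remains (weakly) preferred to every set in the restricted market.

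The subtle point — and what I expect to be the main obstacle — is ruling out sets $S$ that \emph{do} contain one or more bundles currently held by agents in $\RaisedAgents$. For such an $S$, raising the prices of those bundles makes $S$ \emph{more} expensive too, so its utility drops by at least $d_a$ as well; thus the inequality $\utili(\alloci,\prices)\ge\utili(S,\prices)$ is preserved a fortiori. I would spell this out carefully: writing $S=S'\cup B$ where $B$ is the (nonempty) collection of $\RaisedAgents$-held bundles in $S$, the price of $S$ increases by at least $|B|\cdot d_a\ge d_a$, whereas $\price(\alloci)$ increases by exactly $d_a$, so the utility comparison between $\alloci$ and $S$ only improves for $i$. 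Combining the two cases, after the price bump every $i\in\RaisedAgents$ still has $\alloci\in\demand_i(\prices,\partition)$ (now in the \emph{full} market $\partition$), and in particular this persists for $a$ at the moment $a$ is removed from $\RaisedAgents$; since prices of $a$'s bundle are never touched again during this $\RaisePrices()$ call (all subsequent increases are on bundles of agents still in $\RaisedAgents$, and those only raise the price of alternatives containing them, again helping $a$), the property holds for $a$ at the end of the call.

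Finally I would argue \emph{maximality}: the loop exits only when $\RaisedAgents=\emptyset$, and at the iteration in which agent $a$ was removed, the chosen increment was $d_a=\utili(\alloc_a,\prices)-\utili(S_a,\prices)$ where $S_a$ is $a$'s best set avoiding the bundles then held by $\RaisedAgents$; immediately after the increase $\utili(\alloc_a,\prices)=\utili(S_a,\prices)$, so $a$ is indifferent between $\alloc_a$ and $S_a$, and any further increase to $\price(\alloc_a)$ would make $S_a$ strictly better, i.e.\ would knock $\alloc_a$ out of $\demand_a(\prices,\partition)$. Since later iterations never increase $\price(\alloc_a)$, the final value of $\price(\alloc_a)$ is exactly the maximal one compatible with $\alloc_a\in\demand_a(\prices,\partition)$. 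As $a$ ranges over all agents with nonempty allocation (each is removed from $\RaisedAgents$ exactly once), this gives the claim for every such $i$. The one bookkeeping item to check along the way is that $d_a\ge 0$ at every iteration — which holds because $\alloci$ was a demanded set at the start of the iteration, so $\utili(\alloci,\prices)\ge\utili(S,\prices)$ for all $S$, in particular for $S_i$ — ensuring prices indeed only increase, consistent with Monotonicity.
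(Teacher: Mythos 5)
Your proposal is correct, and it reaches the lemma by a more direct route than the paper. The paper first shows that $\RaisePrices$ is \emph{equivalent} to an idealized continuous process in which the prices of $\{\alloc_j : j \in \RaisedAgents\}$ rise uniformly until the first agent's demand set changes, and then reads off both the value of the increment and the maximality claim from that continuous process. You instead run a discrete invariant argument directly on the loop: each agent in $\RaisedAgents$ keeps $\alloci \in \demand_i(\prices,\partition)$ because (i) any alternative avoiding the raised bundles has frozen utility and is dominated via $d_i - d_a \ge 0$, and (ii) any alternative containing a raised bundle becomes more expensive by at least $d_a$ while $\price(\alloci)$ rises by exactly $d_a$; and the removed agent $a$ ends up exactly indifferent to $S_a$, which certifies maximality. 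The substantive observations are the same ones the paper needs (sets touching the raised bundles cannot overtake $\alloci$; $d_a$ is exactly the threshold; the witness set lies in $\partition_{\neg\RaisedAgents}$), so what your version buys is economy --- no detour through the continuous process --- while the paper's version buys intuition and sets up notation ($T_a \subseteq \partition_{\neg\RaisedAgents}$) that it reuses in the subsequent bound on $\AllocateDemand$ recursions. One small point to make explicit when you write it up: for the final maximality claim you need not only that $\price(\alloc_a)$ is never raised after $a$ leaves $\RaisedAgents$, but also that $\utili[a](S_a,\prices)$ is unchanged at the end of the call; this holds because $S_a$ was chosen to avoid the bundles of \emph{all} agents then in $\RaisedAgents$, a superset of those whose bundle prices are raised later --- your text gestures at this but should state it, as the paper does in one dedicated sentence. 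Also note that your ``entry invariant'' for the very first call is needed and does hold (initial prices $\frac{1}{2}\vali(\talloci)$ need not make $\talloci$ demanded, but at that point no agent has $\alloci \neq \emptyset$, so the invariant is vacuous), and thereafter it follows, as you say, from Lemma~\ref{lem.demanded} together with the conclusion of the previous $\RaisePrices$ call.
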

\begin{proof}
For a subset of players $\RaisedAgents$ and an allocation $\allocs$, write $\partition_{\RaisedAgents}$ for $\{ \alloc_j \colon j \in \RaisedAgents \}$ and $\partition_{\neg \RaisedAgents}$ for $\partition \backslash \partition_{\RaisedAgents}$.

We first show that $\RaisePrices$ is equivalent to a different procedure which does not run in polynomial time.  In this alternative procedure, the set $\RaisedAgents$ is defined as before, and the prices of elements of $\partition_{\RaisedAgents}$ are raised uniformly and continuously until the threshold at which the demand set of some $a \in \RaisedAgents$ changes (note that this must occur eventually; the new demanded set may be $\emptyset$).  When this occurs, $a$ is removed from $\RaisedAgents$, and the prices continue to increase for the elements remaining in $\partition_{\RaisedAgents}$.  This process continues until $\RaisedAgents$ is empty.

To see that this is equivalent to $\RaisePrices$, consider some iteration of this new process, say with initial price vector $\prices$ and set $\RaisedAgents$.  Suppose the demand set of some $a \in \RaisedAgents$ changes to $S$, and that the price vector at the point of the change is $\prices'$.  We claim that $S \subseteq \partition_{\neg \RaisedAgents}$.  The reason is that any $S$ that includes elements of $\partition_\RaisedAgents$ has its price increase by at least as much as $\alloc_a$, and hence $a$ cannot prefer it to $\alloc_a$ at prices $\prices'$.  Thus when the demand set of $a$ changes, it must be to some $S \in \demand_a(\prices', \partition_{\neg \RaisedAgents}) = \demand_a(\prices, \partition_{\neg \RaisedAgents})$.
At the point at which the demand of $a$ changes, it must be that $\util_a(\alloc_a,\prices') = \util_a(S,\prices') = \util_a(S,\prices)$.  Thus the price increase between $\prices$ and $\prices'$ is precisely $\util_a(\alloc_a,\prices) - \util_a(S,\prices)$, and moreover player $a$ is precisely the player in $\RaisedAgents$ for which this quantity is minimal (since $a$ was the first for whom the demanded set changed).  It is therefore equivalent to directly compute this quantity for each player in $\RaisedAgents$, choose the minimum, and raise the price of each object in $\partition_{\RaisedAgents}$ by this amount.  This is precisely what is done by $\RaisePrices$, and hence the procedures are equivalent as required.

The lemma now follows easily from the definition of this equivalent process.  For each $i$ with $\alloci \neq \emptyset$, we have that at the point when $i$ is removed from $\RaisedAgents$, an increase of $\price(\alloci)$ would cause $\alloci \not\in \demand_i(\prices,\partition)$.  Moreover, one element in $\demand_i(\prices,\partition)$ is contained in $\partition_{\neg \RaisedAgents}$, and the price of this element does not change between the point at which $i$ is removed from $\items$ and the conclusion of the process.  Thus, when the process concludes, it will still be that an increase of $\price(\alloci)$ would cause $\alloci \not\in \demand_i(\prices,\partition)$.  Thus $\price(\alloci)$ is maximal such that $\alloci \in \demand_i(\prices,\partition)$, as required.
\end{proof}

Note that $\RaisePrices$ defines an ordering over the players with $\alloci \neq \emptyset$: the order in which they are removed from $\RaisedAgents$.  Given an iteration of Algorithm \ref{cwe-alg-poly}, we will write $\pi$ for this permutation defined by the invocation of $\RaisePrices$ on the previous iteration.  That is, $\pi(i)$ denotes the order in which player $i$ was removed; for notational convenience we will set $\pi(i) = \infty$ for all $i$ with $\alloci = \emptyset$.  Note that on the first iteration of Algorithm \ref{cwe-alg-poly} we have $\pi(i) = \infty$ for all $i$.

We now bound the number of iterations that can occur on a single invocation of $\AllocateDemand$.

\begin{lemma}
An invocation of $\AllocateDemand$ can recurse at most $n$ times.
\end{lemma}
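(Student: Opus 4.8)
The plan is to follow the chain of nested recursive calls and show it must be short because the agents appearing along it have strictly decreasing values under the ordering $\pi$ produced by the previous call to $\RaisePrices$. First I would observe that $\AllocateDemand$ recurses only in the branch $|S|=1$: if the demanded singleton $S$ is currently held by some $b\neq a$, we set $\alloc_b\leftarrow\emptyset$ and call $\AllocateDemand(b,T_b)$, so a top-level invocation $\AllocateDemand(a_0,S_0)$ spawns a chain $\AllocateDemand(a_1,S_1),\AllocateDemand(a_2,S_2),\dots$, where for $i\ge 1$ we have $S_i=T_{a_i}$, each of $S_0,\dots,S_{i-1}$ is a singleton, and at the instant step $i$ begins the set $S_{i-1}$ is the current allocation of $a_i$ (which is then emptied before the recursive call). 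All the other cases ($|S_i|>1$, handled by $\Bundle$; or $|S_i|\le 1$ with no conflicting holder) return without recursing. So it suffices to bound the number of agents $a_1,a_2,\dots$.

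Let $\allocs^{\mathrm{RP}}$ be the allocation at the start of the current iteration of the main loop of Algorithm~\ref{cwe-alg-poly} --- equivalently, the allocation at the end of the preceding call to $\RaisePrices$, since nothing in between alters allocations --- and let $\pi$ be the permutation defined by that call, with $\pi(i)=\infty$ whenever $\alloc^{\mathrm{RP}}_i=\emptyset$. The property I would extract from the description of $\RaisePrices$ (and the reasoning in Lemma~\ref{lem.raiseprices}) is: for every $i$ with $\alloc^{\mathrm{RP}}_i\neq\emptyset$, the set $T_i$ is disjoint from $\alloc^{\mathrm{RP}}_j$ for every $j$ with $\pi(j)\ge\pi(i)$, because $T_i=S_i$ is demanded over $\partition$ with the allocations of the agents still in $\RaisedAgents$ at the moment $i$ is removed --- exactly the $j$ with $\pi(j)\ge\pi(i)$ --- deleted.

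Next I would prove, by induction on $i\ge 1$, the invariant: just before step $i$, the current allocation agrees with $\allocs^{\mathrm{RP}}$ except that the singletons $S_0,\dots,S_{i-1}$ have each been shifted one agent back along the chain (so $S_{\ell-1}$ now sits with $a_{\ell-1}$ rather than with its $\allocs^{\mathrm{RP}}$-owner $a_\ell$, for $\ell\le i-1$, the agent $a_i$ is currently unallocated, and $\alloc^{\mathrm{RP}}_{a_\ell}=S_{\ell-1}$ for $1\le\ell\le i$), and moreover $\pi(a_1)>\pi(a_2)>\dots>\pi(a_i)$. The base case holds since $a_0\in\Pool$ gives $\alloc^{\mathrm{RP}}_{a_0}=\emptyset$ and $\alloc^{\mathrm{RP}}_{a_1}=S_0\neq\emptyset$, forcing $\pi(a_1)<\infty=\pi(a_0)$. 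For the step, $S_i=T_{a_i}$ is disjoint from $\alloc^{\mathrm{RP}}_{a_\ell}$ for all $\ell\le i$ (each such $\pi(a_\ell)\ge\pi(a_i)$ by the ordering in the invariant), hence $S_i\notin\{S_0,\dots,S_{i-1}\}$; so by the bookkeeping part of the invariant, if $|S_i|=1$ and $S_i$ is currently held by some $a_{i+1}\neq a_i$ then $a_{i+1}$ is exactly the $\allocs^{\mathrm{RP}}$-owner of $S_i$, i.e. $\alloc^{\mathrm{RP}}_{a_{i+1}}=S_i=T_{a_i}$; combined with disjointness of $T_{a_i}$ from $\alloc^{\mathrm{RP}}_j$ for $\pi(j)\ge\pi(a_i)$ this forces $\pi(a_{i+1})<\pi(a_i)$, and the action of step $i$ updates the bookkeeping as required.

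Finally, since $a_1,a_2,\dots$ have strictly decreasing finite $\pi$-values and $\pi$ takes at most $\nagent-1$ finite values --- because $a_0$ was unallocated when drawn from the pool, so at most $\nagent-1$ agents are allocated in $\allocs^{\mathrm{RP}}$ --- the chain contains at most $\nagent-1$ such agents; together with the initial call this is at most $\nagent$ invocations of $\AllocateDemand$. (This also covers the first main-loop iteration, where $\pi\equiv\infty$ and no conflict exists, so the chain has length one.) The step I expect to be the real obstacle is pinning down the bookkeeping invariant precisely enough to justify ``the current holder of $S_i$ is its $\allocs^{\mathrm{RP}}$-holder'': one must verify that the \emph{only} singletons whose owner has changed so far are $S_0,\dots,S_{i-1}$, and that $T_{a_i}$ avoids every one of them, which is exactly where the property $T_{a_i}\cap\alloc^{\mathrm{RP}}_j=\emptyset$ for $\pi(j)\ge\pi(a_i)$ gets used, via the strict $\pi$-ordering already established.
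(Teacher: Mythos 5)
Your proof is correct and follows essentially the same route as the paper: you show that along the recursion chain the ordering $\pi$ from the preceding call to $\RaisePrices$ strictly decreases, using the fact that each $T_a$ is demanded over $\partition$ with the allocations of all agents still in $\RaisedAgents$ (i.e.\ those with $\pi(j)\ge\pi(a)$) removed. The only difference is that you make explicit, via the bookkeeping induction, the step the paper leaves implicit --- that the current holder of the demanded singleton is still its holder from the time of $\RaisePrices$ --- which is a faithful elaboration rather than a different argument.
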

\begin{proof}
Note that $\AllocateDemand$ concludes with a potential tail recursion, which can be thought of as an iteration of $\AllocateDemand$ with a different agent.  We must show that this tail recursion cannot occur more than $n$ times in a single invocation of $\AllocateDemand$.  To show this, we'll show that if $\AllocateDemand$ on input $a$ results in a tail recursion with input $b$, then it must be that $\pi(b) < \pi(a)$.  In particular, this means that no agent $i$ can be passed as input to $\AllocateDemand$ more than once in a recursive chain, and hence the number of recursive calls is at most $n$.

To prove the claim, note that a recursive call occurs precisely when agent $a$ demands a single object from $\partition$, and this bundle is currently assigned to a bidder $b$.  In an initial (i.e.\ non-recursive) call to $\AllocateDemand$ we have $\pi(a) = \infty$ (since $a$ was drawn from $\Pool$) and $\pi(b) < \infty$, so the result holds trivially.  In a recursive call we have $\pi(a) < \infty$, and $\alloc_b \in \demand_a(\prices,\partition)$.  However, recalling our notation from the proof of Lemma~\ref{lem.raiseprices}, we know that the demanded set $T_a$ of $a$ is contained entirely in $\partition_{\neg \RaisedAgents}$.  Thus, since $T_a = \alloc_b$, it must be that $b \not\in \RaisedAgents$ when $a$ is removed from $\RaisedAgents$, and hence $\pi(b) < \pi(a)$.
\end{proof}

\begin{theorem}
Algorithm \ref{cwe-alg-poly} runs in polynomial time.
\end{theorem}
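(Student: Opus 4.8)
The plan is to bound the running time as the product of two quantities: the number of iterations of the outer \textbf{while} loop of Algorithm~\ref{cwe-alg-poly}, and the cost of a single iteration. The cost of a single iteration is the easy part. Each iteration issues a constant number of demand queries over the current reduced market (to obtain $\demand_a(\prices,\partition)$ and check the rejection condition), one call to $\AllocateDemand$, and one call to $\RaisePrices$. By the lemma bounding its recursion depth, $\AllocateDemand$ recurses at most $n$ times, and each level performs only set operations over the $m$ objects and updates at most $n$ allocations and prices, with no new demand queries (it reuses the sets $T_b$ precomputed during the previous $\RaisePrices$). In $\RaisePrices$, the inner \textbf{while} loop deletes exactly one agent from $\RaisedAgents$ per iteration, hence runs at most $n$ times, and each of its iterations makes at most one demand query per remaining agent and updates at most $n$ prices. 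So a single outer iteration costs $O(n^2)$ demand queries and $\mathrm{poly}(n,m)$ additional work.

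The main work is therefore to show the outer loop runs $O(n)$ times, which I would establish with a potential argument on $|\partition|$. Since each outer iteration removes one agent from $\Pool$, it suffices to bound the number of additions to $\Pool$. An agent is added either at initialization ($n$ in total) or inside $\AllocateDemand$ when a newly allocated set $S$ with $|S| = k \ge 2$ is formed: the call $\partition \leftarrow \Bundle(\partition,S)$ merges the $k$ sub-bundles comprising $S$ into a single bundle, and the at most $k$ agents then holding one of those sub-bundles are deallocated back into $\Pool$. By the Monotonicity property, bundles are only merged and never split, so $|\partition|$ is non-increasing; it is $O(n)$ initially and stays $\ge 1$, and each such merge decreases it by $k-1 \ge 1$. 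Hence the number of merges is $O(n)$, and the total number of sub-bundles absorbed over all merges is $\sum k_i = \sum (k_i - 1) + (\#\text{merges}) = O(n)$, so only $O(n)$ re-additions to $\Pool$ ever occur. This bounds the number of outer iterations by $O(n)$ and, incidentally, proves termination. One small point to verify is that a rejected agent keeps an empty allocation forever (the only way an allocation becomes nonempty is a call to $\AllocateDemand$, which is never invoked on a rejected agent), so rejected agents are never among the re-added ones and the count is not inflated.

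Combining the two bounds, Algorithm~\ref{cwe-alg-poly} uses $O(n^3)$ demand queries over reduced markets — which our model explicitly permits — together with $\mathrm{poly}(n,m)$ further arithmetic and bookkeeping; one should also note in passing that every price is a sum and difference of polynomially many valuation values, so the bit complexity of the prices remains polynomial. I expect the outer-loop bound to be the real obstacle: in Algorithm~\ref{cwe-alg} the potentially unbounded work was hidden in the $\eps$-increments of $\ResolveConflict$, and the analogous danger here is that the aggressive bundling inside $\AllocateDemand$ keeps throwing agents back into $\Pool$. The resolution is precisely to charge each such re-insertion against the monotone potential $|\partition|$, which caps the total damage at $O(n)$ and is what makes the bundling compatible with a polynomial iteration count.
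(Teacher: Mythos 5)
Your proof is correct and follows essentially the same route as the paper: bound the recursion depth of $\AllocateDemand$ by the permutation lemma, and bound the number of outer iterations by charging re-insertions into the pool to the monotonically shrinking partition $\partition$, with rejections and initial insertions contributing only $O(n)$ more. If anything, your accounting via pool additions is slightly more careful than the paper's one-line claim that every main-loop iteration either bundles or rejects (which overlooks iterations that merely allocate an unconflicted singleton), but the underlying idea --- merges can happen at most $O(n)$ times because bundles never split --- is the same.
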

\begin{proof}
In each iteration of the main loop, either a set of objects is bundled or an agent is added to the rejection set $R$.  Each of these can happen at most $n$ times.  Since each invocation of $\AllocateDemand$ also runs in polynomial time by the above lemma, the result follows.
\end{proof}

\begin{theorem}
Algorithm~\ref{cwe-alg-poly} returns an CWE with social welfare at least half of the optimal allocation.
\end{theorem}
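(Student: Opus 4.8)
The plan is to mirror the proof of Theorem~\ref{th:halfSW} almost verbatim, using that Algorithm~\ref{cwe-alg-poly} preserves exactly the two invariants that argument relied on: the stated \textsc{Monotonicity} property (prices only increase, bundles never split, and an allocated bundle never becomes unallocated) together with Lemmas~\ref{lem.demanded} and~\ref{lem.raiseprices}. First I would verify that the returned pair $(\allocs,\prices)$ is a CWE. Termination is already in hand (the preceding theorem bounds the running time), so at the end $\Pool=\emptyset$; let $U=\{i:\alloci\neq\emptyset\}$. For $i\in U$, Lemma~\ref{lem.demanded} says $i$ holds a demanded set right after the last call to $\AllocateDemand$, and Lemma~\ref{lem.raiseprices} says the ensuing $\RaisePrices$ leaves each allocated bundle demand-maximal, so $\alloci\in\demand_i(\prices,\partition)$ at termination. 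Any agent with an empty allocation at termination must lie in $\Reject$ (the loop exits only when $\Pool$ is empty, so any other empty-allocation agent would still be in $\Pool$); and when such an $i$ was added to $\Reject$, every set in its demand correspondence had nonpositive utility, so $\emptyset$ was demanded, and since prices only rise afterwards while the utility of $\emptyset$ stays $0$, we still have $\emptyset\in\demand_i(\prices,\partition)$ at the end. Hence $(\allocs,\prices)$ is stable in $\auction_{\partition(\allocs)}$, i.e.\ a CWE.

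For the welfare bound I would reproduce the estimate of Theorem~\ref{th:halfSW}. Write $\SW(\allocs)=\sum_{i\in U}\utili(\alloci,\prices)+\sum_{i\in U}\price(\alloci)$. Since each $\talloc_j$ starts at price $\tfrac12\val_j(\talloc_j)$, a merge sets the new price to the sum of the merged prices, and prices are nondecreasing (now also under $\RaisePrices$), we get $\price(\alloci)\ge\sum_{\talloc_j\subseteq\alloci}\tfrac12\val_j(\talloc_j)$, hence $\SW(\allocs)\ge\sum_{i\in U}\utili(\alloci,\prices)+\tfrac12\sum_{\talloc_j\subseteq\bigcup_{k\in U}\alloc_k}\val_j(\talloc_j)$. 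Next, for a bundle $\talloc_j$ not contained in any final bundle, \textsc{Monotonicity} (``once allocated, always allocated'') forces that $\talloc_j$ was never merged or allocated, so it is still a singleton element of the partition priced at $\tfrac12\val_j(\talloc_j)$; this in turn forces $j\in U$ (otherwise agent $j$ could demand $\talloc_j$ for nonnegative utility, contradicting $\emptyset\in\demand_j(\prices,\partition)$, unless $\val_j(\talloc_j)=0$ in which case the bundle is irrelevant), and since $\alloc_j\in\demand_j(\prices,\partition)$ while $\talloc_j$ is available at price $\tfrac12\val_j(\talloc_j)$, we get $\utili[j](\alloc_j,\prices)\ge\util_j(\talloc_j,\prices)=\tfrac12\val_j(\talloc_j)$. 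These agents $j$ are distinct elements of $U$ and every $\utili(\alloci,\prices)$ is nonnegative, so $\sum_{i\in U}\utili(\alloci,\prices)\ge\tfrac12\sum_{\talloc_j\not\subseteq\bigcup_{k\in U}\alloc_k}\val_j(\talloc_j)$. Adding the two bounds gives $\SW(\allocs)\ge\tfrac12\sum_j\val_j(\talloc_j)=\tfrac12\SW(\tallocs)$; taking $\tallocs$ to be a welfare-optimal allocation completes the proof.

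The main obstacle --- really the only nontrivial point beyond replaying the earlier argument --- is to be sure that the new ingredients of Algorithm~\ref{cwe-alg-poly} relative to Algorithm~\ref{cwe-alg}, i.e.\ the aggressive price jumps in $\RaisePrices$ and the immediate-reallocation-with-tail-recursion in $\AllocateDemand$, neither violate monotonicity of prices nor the ``once allocated, always allocated'' property, and still leave allocated bundles demand-maximal. But these are precisely the content of the stated \textsc{Monotonicity} property and Lemmas~\ref{lem.demanded} and~\ref{lem.raiseprices}, so once those are invoked the rest is routine. The one genuinely new piece of bookkeeping is the $\Reject$ set, which takes over the role played by the observation ``$\emptyset\in\demand_j(\prices)$'' in the proof of Theorem~\ref{th:halfSW}, and it needs nothing beyond price monotonicity.
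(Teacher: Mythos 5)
Your proposal is correct and follows essentially the same route as the paper: the paper's own proof simply cites Lemma~\ref{lem.demanded} for the CWE property and observes that the welfare argument of Theorem~\ref{th:halfSW} carries over because the initial prices are unchanged and no bundle becomes unallocated after being allocated. Your version just spells out the details the paper leaves implicit (the $\Reject$ agents keeping $\emptyset$ in their demand under rising prices, and $\RaisePrices$ preserving demand-maximality via Lemma~\ref{lem.raiseprices}), which is consistent with, not different from, the paper's argument.
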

\begin{proof}
The fact that Algorithm~\ref{cwe-alg-poly} returns an CWE follows immediately from Lemma~\ref{lem.demanded}. The argument for the approximation factor guarantee is the same as for Algorithm~\ref{cwe-alg}, as this depends only on the starting condition (which is unchanged) and the fact that no object becomes unallocated after it has been allocated.
\end{proof}

\section{Revenue approximation}
\label{sec:revenue}

%
%



In this section we consider the objective of the seller's revenue.
Clearly, for any valuation profile, the seller's revenue can never exceed the social welfare of the optimal allocation.
Therefore, given an allocation, its social welfare serves as a natural benchmark for the revenue objective.
We prove that given an allocation $\tallocs$, the seller can compute in polynomial time a CWE that extract revenue of $\sfrac{1}{O(\log\nagent)}$ of the social welfare of $\tallocs$.

We first prove a lower bound: there are instances in which no CWE extracts revenue greater than $\sfrac{1}{\ln(\nagent)}$ times the optimal social welfare.

{\bf An example with a $\log n$ separation.}
Consider a market that consists of $\nagent$ items and $\nagent$ unit-demand buyers, where buyer $i$ has value $\vali(\{j\})=\sfrac{1}{i}$ for every item $j$. In any optimal allocation every agent gets exactly one item, which results in a social welfare of $\sum_{i=1}^{\nagent}\sfrac{1}{i}\approx \ln\nagent$. Any reduced set of items has the same structure of the agent's valuations as before; i.e. the reduced market contains $\nitem\le\nagent$ items with $\nagent$ unit-demand buyers, where buyer $i$ has value  $\vali(\{j\})=\sfrac{1}{i}$ for every item $j$ in the market. It is easy to verify that due to the structure of valuations, in any CWE all allocated items must have the same price.
Suppose that $k$ agents receive non-empty bundles; then one of these agents has index $i \geq k$.
For this agent, $\vali(\cdot)=\sfrac{1}{i}\le\sfrac{1}{k}$.
Therefore, the price on every sold item is at most $\sfrac{1}{k}$, which generates revenue of at most $k\cdot\sfrac{1}{k}=1$.

\medskip

We next show that given an allocation $\tallocs$, one can compute a CWE that extracts revenue within a factor $\sfrac{1}{O(\log\nagent)}$ of the social welfare of $\tallocs$. As a corollary, there always exists a CWE in which the revenue is at least a $\sfrac{1}{O(\log\nagent)}$ fraction of the optimal social welfare.

To see how to construct a CWE with high revenue, consider beginning with a CWE with high social welfare.  A natural approach to increasing revenue is to impose reserve prices: a lower bound on the price of each bundle.  However, manipulating prices in this way can affect demanded sets in non-trivial ways, and it is not clear that the final outcome will actually generate more revenue (or even be stable at all).  Instead of imposing a reserve price, we will instead consider adding a constant amount to the price of \emph{each} bundle.  This operation is conceptually similar to imposing a reserve, but does not change the structure of a stable allocation (beyond compelling some agents to leave empty-handed).  We prove that there exists at least one choice for this per-bundle price increase such that the corresponding revenue is least a logarithmic fraction of the initial social welfare.


\begin{theorem}
\label{th:revenue}
Given an arbitrary allocation $\tallocs$, one can find a CWE that
extracts revenue within factor $\sfrac{1}{O(\log\nagent)}$ of $\SW(\tallocs)$ in a polynomial number of demand queries.
\end{theorem}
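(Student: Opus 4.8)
The plan is to begin with a high-welfare CWE and then inflate every bundle price by a single well-chosen constant; the two things that need checking are (i) that the price-shifted outcome is still a CWE, and (ii) that some shift extracts a logarithmic fraction of the welfare as revenue. First I would run Algorithm~\ref{cwe-alg-poly} on the input allocation $\tallocs$ to obtain, in polynomially many demand queries, a CWE $(\allocs,\prices)$ over a partition $\partition=\partition(\allocs)$ with $\SW(\allocs)\ge\frac12\SW(\tallocs)$. Let $U=\{i:\alloci\ne\emptyset\}$ and $u_i=\util_i(\alloci,\prices)\ge 0$ for $i\in U$. For $c\ge 0$, let $\prices^c$ be $\prices$ with the price of every bundle raised by $c$, and let $\allocs^c$ keep $\alloci$ assigned to $i$ exactly when $u_i\ge c$, releasing the remaining bundles into the unallocated pile $\alloc_0^c$ (which we price above $\max_i\vali(\items)$; this costs nothing, since unsold bundles contribute no revenue).

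For step (i), the crucial inequality is that at prices $\prices^c$ any nonempty set $T$ formed from $k\ge 1$ bundles of $\partition$ has price larger by $kc\ge c$, so $\util_i(T,\prices^c)\le\util_i(T,\prices)-c\le u_i-c$, using that $\alloci$ was utility-maximal over all subsets of $\partition$ at the original CWE. Hence if $u_i\ge c$ then $\alloci$ (yielding utility $u_i-c\ge 0$) still beats $\emptyset$ and every other set, so $\alloci\in\demand_i(\prices^c,\partition)$; and if $u_i<c$ then every nonempty set yields negative utility, so $\demand_i(\prices^c,\partition)=\{\emptyset\}$. Because $\alloc_0^c$ is priced prohibitively, no agent demands a set containing it, so stability persists when we pass to the coarser partition $\partition(\allocs^c)$; thus $(\allocs^c,\prices^c)$ is a genuine CWE for every $c\ge 0$ (and $c=0$ just recovers $(\allocs,\prices)$).

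For step (ii), the revenue of $(\allocs^c,\prices^c)$ is $\Revenue(c)=\sum_{i\in U:\,u_i\ge c}(\price(\alloci)+c)\ge c\cdot|\{i\in U:u_i\ge c\}|$. Write $P=\sum_{i\in U}\price(\alloci)=\Revenue(0)$ and $V=\sum_{i\in U}u_i$, so $\SW(\allocs)=P+V$ and $\max\{P,V\}\ge\SW(\allocs)/2$. If $P\ge\SW(\allocs)/2$, take $c=0$ and we are done. Otherwise $V\ge\SW(\allocs)/2$, and I would invoke the standard fact that for nonnegative reals $u_{(1)}\ge\cdots\ge u_{(k)}$ summing to $V$ there is a $j$ with $j\,u_{(j)}\ge V/H_k$ (otherwise $\sum_j u_{(j)}<\sum_j V/(jH_k)=V$), where $H_k$ is the $k$-th harmonic number. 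Choosing $c=u_{(j)}$ gives $\Revenue(c)\ge j\,u_{(j)}\ge V/H_k\ge V/H_\nagent$. In either case $\max_c\Revenue(c)\ge\SW(\allocs)/(2H_\nagent)\ge\SW(\tallocs)/(4H_\nagent)$, and $H_\nagent=O(\log\nagent)$ gives the claimed bound.

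It remains to locate a good $c$ efficiently. The only candidates worth trying are $0$ and the utilities $\{u_i:i\in U\}$ (the breakpoints of the sawtooth function $\Revenue(\cdot)$), at most $\nagent+1$ of them. For each $i$, the value $u_i$ is the largest $c$ for which $\alloci$ remains in $\demand_i(\prices^c,\partition)$, which can be determined by binary search using one demand query per step (analogously to the $\eps$-increments in \ResolveConflict); once all $u_i$ are known, $\Revenue(c)$ is computed directly for each candidate and the best one is output. The step I expect to be most delicate is the CWE verification in part (i) — in particular ruling out that releasing low-utility agents' bundles into the unallocated pile creates a newly attractive demanded set — which is exactly what the union-of-bundles inequality together with the high price on $\alloc_0^c$ takes care of.
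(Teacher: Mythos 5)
Your proposal is correct, and its skeleton coincides with the paper's: run Algorithm~\ref{cwe-alg-poly} on $\tallocs$ to get a CWE $(\allocs,\prices)$ with at least half the welfare, observe (exactly as in Claim~\ref{cl:cwe-monotonicity}) that adding one constant to \emph{every} bundle price and dropping the agents whose surplus falls below that constant yields another CWE, and then optimize over the constant. Where you genuinely diverge is in the selection of the shift and the counting lemma behind it. The paper restricts attention to $O(\log \nagent)$ geometrically spaced shifts $\sigma^{(t)}=2^{t-1}\SW_0/2k$, shows $\SW_1\ge\frac14\SW_0$ (unless the unshifted revenue already suffices), $\SW_{\ell+1}=0$, and telescopes: some step has $\SW_t-\SW_{t+1}\ge \SW_0/4\ell$, and this drop is charged to $2\Revenue_t$ because each departing agent's value is at most his shifted price. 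You instead take the candidate shifts to be the agents' utilities $u_i$ themselves, split $\SW(\allocs)=P+V$ into revenue plus surplus, and apply the classical single-price (harmonic-number) lemma to the sorted surpluses, getting revenue at least $\SW(\tallocs)/4H_\nagent$ --- a marginally better constant, and an argument that makes transparent why the $\sfrac{1}{i}$ lower-bound instance is tight. The price you pay is needing the values $u_i$ (you recover them by binary search with demand queries, which is fine under the usual bounded-precision convention), whereas the paper's candidate shifts depend only on $\SW_0$ and $k$; in fairness, the paper also needs the $\vali(\alloci)$'s to evaluate its candidates and simply asserts they are easily computed, so this gap is cosmetic. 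Your explicit handling of the unallocated pile (pricing $\alloc_0^c$ prohibitively so that merging the released bundles cannot create a newly demanded set) is slightly more careful than the paper's treatment and is a valid way to make the induced-partition formalism go through.
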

\begin{proof}
Given an allocation $\tallocs$, we first run Algorithm~\ref{cwe-alg-poly} with $\tallocs$ as an input, and obtain a CWE $(\allocs,\prices)$ such that $\SW(\allocs) \geq \frac{1}{2}\SW(\tallocs)$. This step can be done in a polynomial number of demand queries, as established in Section~\ref{sec:poly}.
Let $\allocs=(\alloc_0, \alloc_1, \ldots, \alloc_k)$ and $\prices=(\price_0, \ldots, \price_k)$;
that is, in the CWE $(\allocs, \prices)$, for every $i=1, \ldots, k$, agent $i$ receives the bundle $\alloc_i$ at a price of $\price_i$.
We next make the following important observation.


\begin{claim}
\label{cl:cwe-monotonicity}
Let $(\allocs,\prices)$ be a CWE, where $\allocs=(\alloc_0, \ldots, \alloc_k)$ and $\prices=(\price_0,\dots,\price_k)$.
For any positive constant $\sigma$ let $\prices^{\sigma}$ be the price vector
$(\price_0+\sigma,\dots,\price_k+\sigma)$, and $\allocs^{\sigma}$ be the allocation 
\[
\forall i=1,\dots,k \quad
\alloci^{\sigma} = \begin{cases} \alloci &\mbox{if } \vali(\alloci) \geq \price^{\sigma} \\
\emptyset & \mbox{otherwise}. \end{cases}
\]
Then, $(\allocs^{\sigma},\prices^{\sigma})$ is a CWE.
\end{claim}
\begin{proof}
For any non-empty set $S$ it holds that
$\vali(S)-\sum_{j\in S}\price^{\sigma}_j\le\vali(S)-\sum_{j\in S}\price_j-\sigma$.
On the other hand, $\vali(\alloci)-\price_i^{\sigma}=\vali(\alloci)-\price_i - \sigma$.
Since $(\allocs, \prices)$ is a CWE, it follows that $\vali(\alloci)-\price_i \geq \vali(S)-\sum_{j\in S}\price_j$ for every $S$.
Combining the above inequalities, we get that $\utili(\alloci,\price^{\sigma}) \geq \utili(S,\price^{\sigma})$.
In addition, $\utili(\alloci,\price^{\sigma}) \geq 0$ if and only if $\vali(\alloci) \geq \price_i^{\sigma}$.
The assertion follows.
\end{proof}

Let $\SW_0=\sum_{i=1}^{k}\vali(\alloci)$ denote the social welfare of CWE $(\allocs,\prices)$.
In addition, let $\ell=\lceil\log(2k)\rceil$, and for every integer $t\in\{1,\dots,\ell+1\}$ define $\sigma^{(t)}= 2^{t-1}\frac{\SW_0}{2k}$.
Let $\prices^{(t)}$ and $\allocs^{(t)}$ be the vectors with $\alloc_i^{\sigma^{(t)}}$ defined as in Claim~\ref{cl:cwe-monotonicity}
\begin{align*}
\prices^{(t)}=(\price_0+\sigma^{(t)},\ldots,\price_k+\sigma^{(t)}),
\allocs^{(t)}=(\alloc_1^{\sigma^{(t)}},\ldots,\alloc_k^{\sigma^{(t)}}),
\end{align*}
Due to Claim~\ref{cl:cwe-monotonicity}, for every $t \in \{1, \ldots, \ell+1\}$, $(\allocs^{(t)},\prices^{(t)})$ is a CWE.
For every CWE $(\allocs^{(t)},\prices^{(t)})$, we let $\SW_t$, $\Revenue_t$ and $W_t$ denote its social welfare, revenue, and the set of indices of allocated bundles, respectively.
Note that for every $t$, $W_{t+1} \subseteq W_t$.
Finally, let $\Revenue_0$ denote the revenue of CWE $(\allocs,\prices)$.
The following is the key lemma in the proof of the theorem.
\begin{lemma}
\label{lem:cwe-revenue-gap}
$\exists~~t \in \{0, 1, \dotsc, \ell+1\}\quad\text{s.t.}~~\Revenue_t\ge\frac{\SW_0}{8\ell}.$
\end{lemma}
\begin{proof}
We first observe that
\begin{align*}
\SW_0  = \sum_{i=1}^{k}\vali(\alloci)  & =  \sum_{i\in W_1}\vali(\alloci) + \sum_{i\notin W_1}\vali(\alloci) \\
& \le  \sum_{i\in W_1}\vali(\alloci) + \sum_{i\notin W_1}(\pricei + \frac{\SW_0}{2k}) \\
&\le \SW_1 + \Revenue_0 + k\cdot\frac{\SW_0}{2k}\\
&= \SW_1 + \Revenue_0 + \frac{\SW_0}{2}.
\end{align*}
The first inequality follows from the fact that for every $i\notin W_1$, $\vali(\alloci) \leq \price_i^{(1)}$, and the second inequality follows by substituting $\SW_1=\sum_{i\in W_1}\vali(\alloci)$ and  $\sum_{i\notin W_1}\pricei \leq \Revenue_0$.

Therefore, $\SW_1\ge \frac{1}{2}\SW_0 - \Revenue_0$.
One may assume that $\Revenue_0\le\frac{1}{4}\SW_0$, since otherwise the assertion of the lemma follows directly.
Thus, $\SW_1\ge \frac{1}{4}\SW_0.$

We next show that $\SW_{\ell+1}=0$.
For every $i \in \{1, \ldots, k\}$, $\price_i^{(\ell+1)} = \pricei + 2^{\ell} \frac{\SW_0}{2k} \geq \pricei + \SW_0 \geq \pricei + \vali(\alloci) \geq \vali(\alloci)$;
thus $W_{\ell+1}=\emptyset$ and $\SW_{\ell+1}=0$.

Given that $\SW_1\ge \frac{1}{4}\SW_0$ and $\SW_{\ell+1}=0$, there must exist some $t\in\{1,\ldots,\ell\}$ such that $\SW_t-\SW_{t+1}\ge \frac{\SW_0}{4\ell}.$
We get:
\begin{align*}
\frac{\SW_0}{4 \ell}  & \leq  \SW_t - SW_{t+1}\\  
&= \sum_{i\in W_{t}\setminus W_{t+1}}\vali(\alloci)\\
&\le \sum_{i\in W_{t}\setminus W_{t+1}}\left(\pricei+2^t\frac{\SW_0}{2k}\right)\\
&\le \sum_{i\in W_{t}}\left(\pricei+2^t\frac{\SW_0}{2k}\right)\\
&\le 2 \sum_{i\in W_{t}}\left(\pricei+2^{t-1}\frac{\SW_0}{2k}\right)
= 2 \Revenue_t,
\end{align*}
where the second inequality follows from the fact that for every $i\in W_{t}\setminus W_{t+1}$, $\vali(\alloci) \leq \pricei^{(t+1)}$.
We get that $\Revenue_t\ge\frac{\SW_0}{8\ell}$, completing the proof of Lemma \ref{lem:cwe-revenue-gap}.
\end{proof}

We are now ready to complete the proof of Theorem \ref{th:revenue}.
Recall that $\SW_0$ is within factor $\frac{1}{2}$ of $\SW(\tallocs)$.
Combining this with the last lemma, and noting that $\ell = \lceil \log (2k)\rceil$, where $k \leq n$ implies that
$\Revenue_t$ is within factor $\frac{1}{O(\log\nagent)}$ of $\SW(\tallocs)$.
To conclude the proof we observe that after the execution of Algorithm~\ref{cwe-alg-poly} all the quantities $\vali(\alloci)$, $\SW_t$, $\Revenue_t$ can be easily computed in polynomial time.\qed
\end{proof}

By applying Theorem~\ref{th:revenue} with the initial allocation $\tallocs$ being the optimal allocation,
the following corollary follows.

\begin{corollary}
\label{cor:revenue}
For every valuation profile $\vals$, there exists a CWE that extracts revenue within a factor $\sfrac{1}{O(\log\nagent)}$ of the optimal social welfare.
\end{corollary}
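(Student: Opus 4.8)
The plan is to obtain Corollary~\ref{cor:revenue} as an immediate consequence of Theorem~\ref{th:revenue}, in exactly the same way that Corollary~\ref{cor:halfOpt} follows from Theorem~\ref{th:halfSW}: instantiate the target allocation $\tallocs$ with a welfare-optimal allocation. First I would fix any $\tallocs \in \argmax_{\allocs}\SW(\allocs)$; such an allocation exists since there are only finitely many allocations of $\items$ among the $\nagent$ agents, so $\SW(\tallocs) = \OPT$, where $\OPT$ denotes the optimal (unconstrained) social welfare. Feeding this $\tallocs$ into Theorem~\ref{th:revenue} produces a CWE whose revenue is at least $\sfrac{1}{O(\log\nagent)}\cdot\SW(\tallocs) = \sfrac{1}{O(\log\nagent)}\cdot\OPT$, which is precisely the claimed bound, establishing existence.

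The only point that deserves a remark is the distinction between existence and computation. Theorem~\ref{th:revenue} is algorithmic \emph{given} its input allocation, but here that input is the optimal allocation, which in general cannot be computed in polynomial time (even with demand queries) for arbitrary valuations; hence the corollary is purely existential, exactly as with Corollary~\ref{cor:halfOpt}. If, in addition, one wants a polynomial-time guarantee for a particular class of valuations, I would instead plug into Theorem~\ref{th:revenue} an allocation $\tallocs$ produced by whatever welfare-approximation algorithm is available for that class: a $c$-approximate $\tallocs$ satisfies $\SW(\tallocs) \geq \OPT/c$, so the resulting CWE extracts revenue $\sfrac{1}{O(c\log\nagent)}\cdot\OPT$, which is $\sfrac{1}{O(\log\nagent)}\cdot\OPT$ whenever $c = O(1)$.

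I do not anticipate any real obstacle: all the work is contained in Theorem~\ref{th:revenue} (and, transitively, in Claim~\ref{cl:cwe-monotonicity} and Lemma~\ref{lem:cwe-revenue-gap}), and the reduction step is a one-line substitution of the input. The one thing I would double-check is that the $O(\log\nagent)$ factor in Theorem~\ref{th:revenue} genuinely depends only on the true number of agents $\nagent$ and not on $\nitem$ or on the size of $\tallocs$; this holds because the proof there controls $\ell = \lceil\log(2k)\rceil$ with $k \le \nagent$, so nothing extra sneaks in when we specialize $\tallocs$ to the optimal allocation.
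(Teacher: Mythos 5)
Your proposal is correct and matches the paper's argument exactly: the paper also obtains Corollary~\ref{cor:revenue} by applying Theorem~\ref{th:revenue} with the initial allocation $\tallocs$ taken to be the welfare-optimal allocation. Your added remarks on the purely existential nature of the statement and on the $\ell = \lceil\log(2k)\rceil$, $k \le \nagent$ dependence are accurate but not needed beyond what the paper states.
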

%
%
%
%

\section{Open Problems}

Our results leave many open questions and avenues for future research.

First, there is a gap between the 2-approximation result for social welfare and the lower bound of 3/2.
We conjecture that 3/2 is the true bound, but closing this gap seems a challenging task.
The integrality gap for the configuration LP approaches 2 when the integral solution is a matching, so this technique cannot be used to improve the gap.

Second, the equilibrium notion studied in this paper does not require market clearance. An important question is whether our results extend to the stronger equilibrium notion of CWE with market clearance.

Third, it would be interesting to study how well item-pricing equilibria, without market clearance, can approximate social welfare.
\emph{A priori}, it seems intuitive that bundling should be most helpful when agent valuations exhibit complementarities.
However, our negative result 
for fractionally sub-additive valuations, from Section \ref{sec:item-pricing}, shows that item pricing can be insufficient for achieving a good welfare approximation even when valuations are complement-free.
For the more restricted class of gross-substitutes valuations, a WE always exists, and thus optimal welfare can be achieved in equilibrium.
Identifying the family of valuations for which a constant fraction of the optimal SW can be achieved via item-pricing equilibrium would be an interesting research direction.
This question seems especially interesting for the class of submodular valuations.

Fourth, our algorithm receives an initial allocation as input and returns a CWE allocation that performs well with respect to the given allocation.
This result leaves open the design of a natural process that arrives at a good approximation without receiving an initial allocation.
The severe NP-hardness results for various valuation families preclude the possibility of a poly-time process that would work for arbitrary inputs, but some families of valuations (e.g. submodular) seem particularly appealing in this context.

Fifth, our algorithms make use of demand queries, but of a special form: we allow the seller to define a partition of the objects into bundles, and can then ask demand queries with respect to the reduced market in which these bundles are the objects for sale.  This interpretation of demand queries seems natural, since the assumption that agents can determine their demands is not tied to the particular items for sale.  What is the additional computational power afforded by this (seemingly stronger) definition of demand queries?

Finally, this paper operates in the full information regime, where incentive compatibility is not a concern.
An interesting question is to what extent our results can be extended to private-information settings.
That is, what is the best social welfare that can be obtained by an incentive-compatible CWE mechanism, poly-time or not?



\bibliographystyle{abbrv}



%


\end{document}